\DeclareOldFontCommand{\bf}{\normalfont\bfseries}{\mathbf}
\newcommand{\blind}{1}
\newtheorem{theorem}{Theorem}
\newtheorem{proposition}{Proposition}
\newcommand{\E}[1]{\mathbb{E}\left[#1\right]}
\newcommand{\Prob}[1]{\mathbb{P}\left[ #1 \right]}
\newcommand{\ot}[2]{\tilde{\omega}_{#1  #2}}
\DeclarePairedDelimiter\floor{\lfloor}{\rfloor}
\newcommand*{\addFileDependency}[1]{% argument=file name and extension
  \typeout{(#1)}
  \@addtofilelist{#1}
  \IfFileExists{#1}{}{\typeout{No file #1.}}
}
\begin{document}

\def\spacingset#1{\renewcommand{\baselinestretch}%
{#1}\small\normalsize} \spacingset{1}

%%%%%%%%%%%%%%%%%%%%%%%%%%%%%%%%%%%%%%%%%%%%%%%%%%%%%%%%%%%%%%%%%%%%%%%%%%%%%%

\if1\blind
{
  
  \title{\bf A Common Atom Model for the Bayesian Nonparametric Analysis of Nested Data}

\author{ Francesco Denti\thanks{During the development of this article, F. Denti was also supported as a Ph.D. student by University of Milano - Bicocca, Milan, Italy and Universit\`a della Svizzera italiana, Lugano, Switzerland.}\\Department of Statistics\\
University of California, Irvine, CA \\ and\\  
Federico Camerlenghi\thanks{Also affiliated to Collegio Carlo Alberto, Piazza V. Arbarello 8, Torino and BIDSA, Bocconi University, Milano, Italy.}\\
Department of Economics, Management and Statistics\\ 
University of Milano - Bicocca, Milan, Italy\\
and\\ Michele Guindani \\
Department of Statistics\\
University of California, Irvine, CA \\ and \\ Antonietta Mira \\ Universit\`a della Svizzera italiana, Lugano, Switzerland \\Universit\`a dell'Insubria, Varese, Italy}
\maketitle
} \fi

\if0\blind
{
  \bigskip
  \bigskip
  \bigskip
  \begin{center}
    {\LARGE\bf A Common Atom Model for the Bayesian Nonparametric Analysis of Nested Data}
\end{center}
  \medskip
} \fi

\clearpage
\begin{abstract}
The use of high-dimensional data for targeted therapeutic interventions requires new ways to characterize the heterogeneity observed across subgroups of a specific population.  In particular, models for partially exchangeable data are needed for inference on nested datasets, where the observations are assumed to be organized in different units and some sharing of information is required to learn distinctive features of the units. In this manuscript, we propose a nested Common Atoms Model (CAM) that is particularly suited for the analysis of nested datasets where the distributions of the units are expected to differ only over a small fraction of the observations sampled  from each  unit.  The proposed CAM allows a two-layered clustering at the distributional and observational level and is amenable to scalable posterior inference through the use of a computationally efficient nested slice-sampler algorithm. We further discuss how to extend the proposed modeling framework to handle discrete measurements, and we conduct posterior inference on a real microbiome dataset from a diet swap study to investigate how the alterations in intestinal microbiota composition are associated with different eating habits. We further investigate the performance of our model in capturing true distributional structures in the population by means of a simulation study.
\end{abstract}

\noindent%
\emph{Keywords:} Common Atoms Model, Microbiome Abundance Analysis, Nested Dataset, Nested Dirichlet Process, Partially Exchangeable Data
\vfill

\newpage
\spacingset{1.5} % DON'T change the spacing!
\section{Introduction}
\label{sec:intro}

The use of high-dimensional data for targeted therapeutic interventions requires new ways to characterize the heterogeneity observed across subgroups of a specific population.  In particular, models for partially exchangeable data are needed for inference on nested datasets, where the observations are assumed to be organized in different, though related, units. 
%Such cases require some sharing of information to learn the distinctive features of the units. 
The borrowing of strength across units induced by these probabilistic structures is tailored to several applied problems. Here we deal with a  microbiome dataset made up of count measurements for 38 subjects (units) from both the U.S.A. and rural Africa, and the interest is to describe the different patterns of microbial diversity observed across the individuals since those patterns could inform future nutritional interventions.  The description of microbial diversity requires investigating the structure, concentration, and richness of microbiota in each subject and how the distributions of microbiota abundances vary across subgroups of subjects.  As the groups are typically unknown, they need to be estimated from the data. 

A few approaches have been proposed in the literature for clustering distributional features directly. For example, \citet{Irpino2015} have recently proposed clustering methods in symbolic statistics, by employing the Wasserstein distance on histograms treated as units. Similarly, \citet{Batagelj2015} have proposed generalized leaders and Ward’s hierarchical methods to cluster modal valued symbolic data. These are exploratory tools, which extend usual multivariate clustering methods to the analysis of (empirical) probability distributions, but they do not allow for a probabilistic assessment of cluster uncertainty.   

The Nested Dirichlet process \citep[nDP,][]{Rodriguez2008} and its extensions have been widely employed to identify distributional groups in Bayesian nonparametric model-based approaches. For example, \citet{Rodriguez2014} have proposed a generalization of the nDP for functional data analysis; \citet{Graziani2015} have investigated how the distribution of the changes of a targeted biomarker varies due to treatment and whether it is associated with a clinical outcome;  \cite{Zuanetti2018} have discussed a marginal nDP for clustering genes related to DNA mismatch repair via the distribution of gene-gene interactions with other genes. The nDP leads to a two-layered clustering: first, it allows grouping together similar units (distributional clustering), and then, within each distributional cluster, it clusters similar observations (observational clustering). 
However, \citet{Camerlenghi2018} have recently proved that the inference obtained using the nDP may be affected by a \emph{degeneracy} property: if two distributions share even only one atom in their support, the two distributions are automatically assigned to the same cluster.  % \cut More precisely, the partially exchangeable partition probability function of the nDP, i.e. the function which describes the probability of each clustering allocation for partially exchangeable data, collapses to a fully exchangeable case when ties are present among the observational atoms. 
To overcome this drawback, \citet{Camerlenghi2018} propose a class of latent nested processes, which relies on estimating a latent mixture of shared and idiosyncratic processes across the subgroups. However, the computational burden of the resulting sampling scheme becomes demanding when the number of units increases. 

The degeneracy of the nDP is particularly problematic when analyzing high-dimensional data in genomics and microbiome studies. Here, the distribution profiles of sequencing data are expected to be quite similar across individuals and to vary only for a small fraction of differentially abundant sequences, which directly intervene to regulate the biological processes and their dysfunctions. %Thus, the distributions of genomic sequences across two population-subgroups are expected to be highly overlapping and correlated. 
Figure \ref{fig:histos} reports a snapshot of the observed microbial distributions for two representative individuals from the dataset we analyze in Section \ref{Sec:MicrobApp}.  In addition to the typical skewness and zero-inflation of microbial distributions, we note that the two distributions considerably overlap, and they are quite similar except for the presence of a small set of sequences which appear with high frequency.
In those applications, the nDP may provide unreliable inferences when comparing distributional patterns across individuals. 

%The degeneracy of the nDP is particularly problematic when analyzing high-dimensional data, such as those commonly encountered in genomics and microbiome studies. Here, the distribution profiles of sequencing data are expected to be quite similar across individuals and to vary only for a small fraction of differentially abundant sequences, which directly intervene to regulate the biological processes and their dysfunctions. Thus, the distributions of genomic sequences across two population-subgroups are expected to be highly overlapping and correlated. In those applications, the nDP may provide unreliable inferences when comparing distributional patterns across individuals.   

In this paper, we propose a nested Common Atoms Model (CAM) that is particularly suited for the analysis of nested data sets, where the distributions of the units are expected to differ only over a small fraction of the observations. Although our proposal could be described as a constrained modification of the nDP, where atoms are allowed to be shared across all subgroups, the CAM i) does not suffer from the degeneracy issue of the nDP, and ii) allows scalable inference  with high-dimensional data. Furthermore, in the nDP, unit-level measurements can be clustered together only within units that are assigned to the same group. Thus, while the within-group clustering still contributes to a compact representation of the data, unit-level inference across subgroups is precluded. Instead, the proposed CAM framework naturally allows unit-level inference and clustering of observations across groups, since the structure of the common atoms allows mapping group-specific distributional patterns to a shared support. Compared to the proposal of \citet{Camerlenghi2018}, the proposed CAM is computationally more efficient, as it allows to conduct inference on a  larger number of observations and population subgroups. To this purpose, we develop a novel nested slice sampler algorithm \citep{Kalli2011}, which allows to target the true posterior distribution, without employing the standard truncation-based approximation, which is typically used for posterior inference with nDP models.

In the microbiome literature, ad-hoc solutions are sometimes adopted to address the challenges put forward by the analysis of microbiome data. For example, when dealing with the excess of zero counts, some authors simply add a small number (e.g. 1) to each count, thus generating ``pseudo counts". Here, we embed the proposed CAM framework within a rounded mixture of Gaussian (RGM) model \citep{Canale2011}. In this way, we effortlessly obtain a BNP nested model for count data that can naturally handle the sparsity and the zero-inflation typical of microbiome abundance tables. The resulting discrete CAM allows to cluster rows of an abundance table according to their distributional characteristics, providing a partition of patients with similar microbiome distribution. For example, the proposed CAM assigns the two subjects of Figure \ref{fig:histos} to two different population subgroups with high probability.

\begin{figure}[!t]
\begin{center}
  \begin{subfigure}[b]{0.45\textwidth}
    \includegraphics[width=\textwidth]{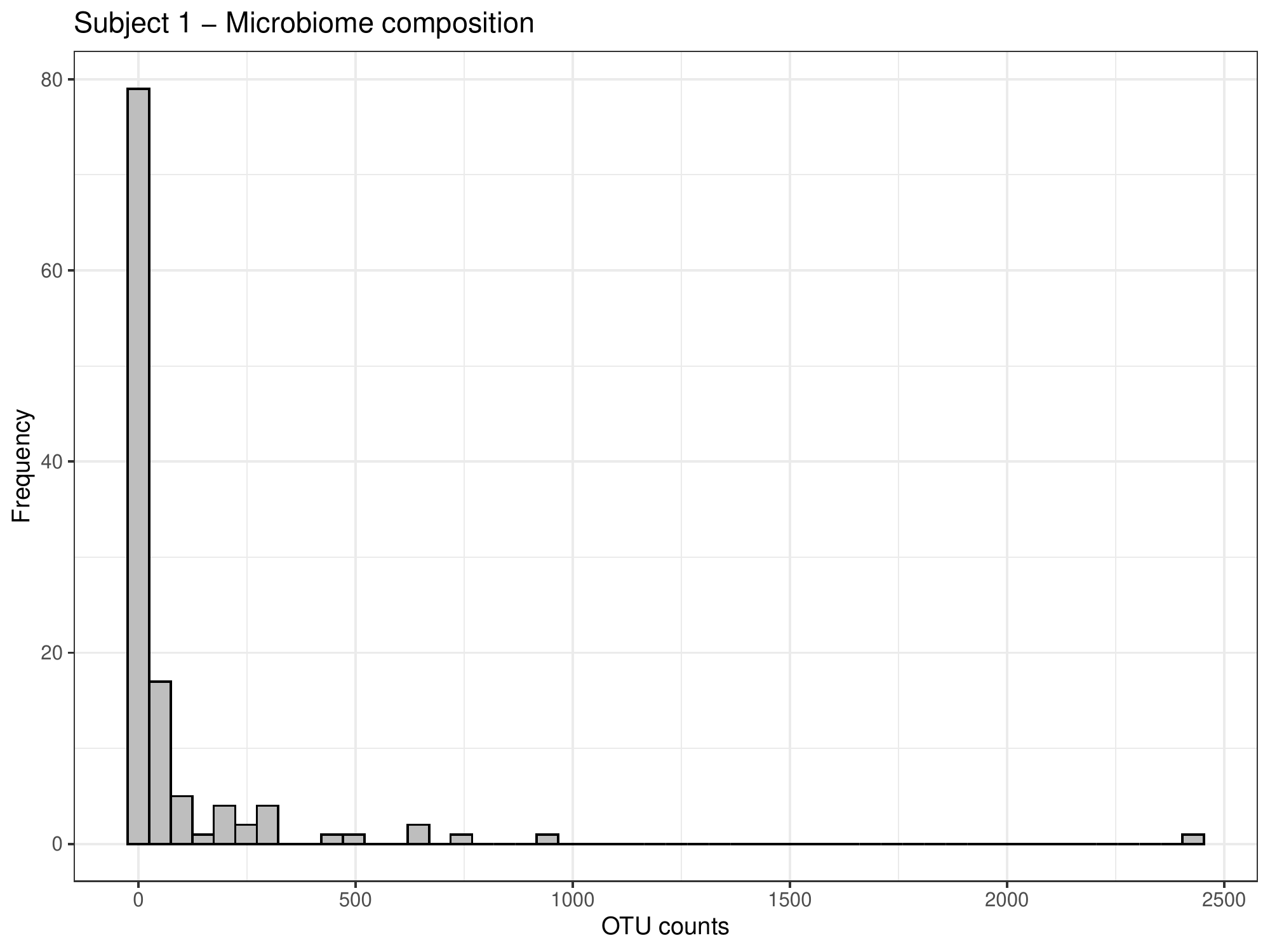}
    \caption{Subject 1}
    \label{fig:histos1}
  \end{subfigure}
  \begin{subfigure}[b]{0.45\textwidth}
    \includegraphics[width=\textwidth]{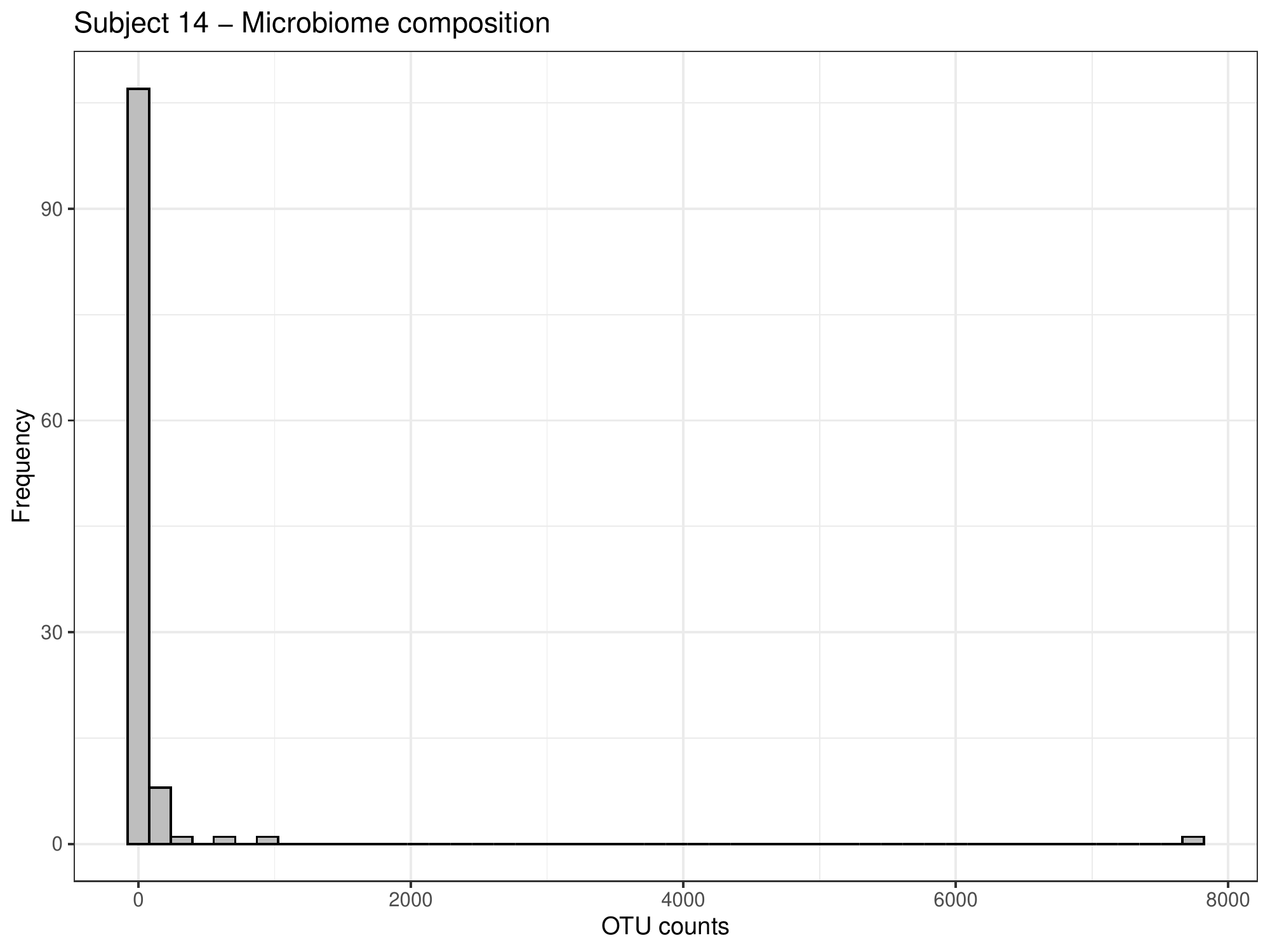}
    \caption{Subject 14}
    \label{fig:histos2}
  \end{subfigure}
  \caption{Histograms of the microbiome populations of two subjects in the study of \citet{OKeefe2015}. The distributions of the two units appear very similar and extremely skewed.}
    \label{fig:histos}
  \end{center}
\end{figure}

The remainder of the article is as follows. In Section \ref{SEC::2cam_continuous} we introduce our model for continuous measurements, and we discuss its properties. In Section \ref{Sec3:CAM_count} we discuss how to adapt the model to count data. In Section \ref{SEC::4Post}, we face posterior inference and outline the nested version of the slice sampler. Section \ref{Sec:MicrobApp} applies our model to a publicly available microbiome dataset in a diet swap study. Section \ref{SEC::6SimStudy} presents a simulation study to assess the clustering behavior of the model as the number of observations and groups grow in different scenarios. Section \ref{SEC::7Conclusion} summarizes our contributions and discusses some future directions. We defer proofs, additional algorithms, and simulation studies to the Supplementary Material.

\section{Common Atoms Model for Continuous Measurements}
\label{SEC::2cam_continuous}
We consider a \emph{nested} dataset, where we are provided with continuous measurements $\bm{y}_j= (y_{1,j}, \ldots , y_{n_j, j})$  observed over $J$ experimental units. We assume that each observation $y_{i,j}$,  $i=1,\ldots,n_j$ and $j=1,\ldots,J$, takes values in a suitable Polish space $\mathbb{X}$ endowed with the respective Borel $\sigma$-field $\mathcal{X}$. %Let  \in \mathbb{X}^{n_j}$, $j=1, \ldots , J$. 
Similarly as in the nDP \citep{Rodriguez2008}, our goal is to achieve a partition of the vectors  $\bm{y}_1, \ldots, \bm{y}_J$ into a few, say $K \leq J$, distributional clusters. However, %the use of the nDP is not possible, indeed 
\citet{Camerlenghi2018} have shown that the partially exchangeable partition probability function of the nDP implies that distributions collapse into a common cluster when they share even only one atom. This unappealing behavior can be avoided if the prior explicitly models the commonality of atoms between groups. Here, we propose a Common Atoms Model (CAM) such that distributions belonging to different clusters are characterized by specific weights assigned to a common set of atoms. In this section, we define the model and investigate its properties for analyzing high-dimensional data. More specifically, let $G_j$, as $ \: j=1, \ldots, J$, denote the distribution of the $j$-th experimental unit, so that
\begin{equation}
\begin{aligned}
(y_{i_1,1},\ldots, y_{i_J, J}) |    \, G_1,\ldots,G_J &\overset{ind.}{\sim} G_1\times\cdots\times G_J\:, \qquad 
(i_1, \ldots , i_J) \in \mathbb{N}^J. 
\end{aligned}
\label{CAM1}
\end{equation}
Then, similarly as in the nDP formulation,  we assume that the $G_j$'s are a sample from an almost surely discrete distribution $Q$ over the space of probability distributions on $\mathcal{X}$, namely
\begin{equation}
\begin{aligned}
G_1,\ldots,G_J|Q &\overset{i.i.d.}{\sim}Q, \qquad \qquad Q  &=\sum_{k\geq 1}\,  \pi_k \, \delta_{G^*_k }.
\end{aligned}
\label{CAM2}
\end{equation}
\noindent
where $G^*_k = \sum_{l\geq 1}, \omega_{ l,k}\, \delta_{\theta_l}$, $ k \geq 1$, and the common atoms $\theta_1,\theta_2,\ldots$ are drawn from a non-atomic base measure $H$ on $(\mathbb{X}, \mathcal{X})$. We further assume the Griffiths-Engen-McCloskey (GEM) distribution for the weights, which characterizes the stick-breaking (or Sethuraman's)   construction of the Dirichlet process \citep{Sethuraman1994}, i.e. we consider $V_k\sim Beta(1,\alpha)$, $k\geq 1$, and then set  $\pi_1=V_1$, and $\pi_k=V_k\, \prod_{r=1}^{k-1} (1-V_r)$, $k>1$,  indicated as $\bm{\pi} = \{\pi_k\}_{k\geq 1}\sim GEM(\alpha)$. Similarly, $\bm{\omega}_k=\{\omega_{ l,k }\}_{l \geq 1}\sim GEM(\beta)$ for all $k\geq 1$.\\ 
The distribution defining $G_k^*$ can be seen as a single-atom dependent DP %($x DDP$) 
as defined in \citet[Definition 3]{barrientos2012}, indexed by a categorical covariate with support on $\mathbb{N}$.
%The idea of a common atoms structure between the realizations of a DP has been previously investigated in \citet{Hatjispyros2016}, as a parsimonious but yet flexible alternative to the pairwise-dependent DP of \citet{Hatjispyros2011}, obtained via superposition of random measures. They show that a common atoms structure as a mixing measure is sufficiently flexible to approximate any infinite mixture. Additionally, such a choice makes the computation of distances between random distributions intuitive.\\
\citet{Hatjispyros2016} have previously investigated the use of a common atoms structure to  model pairwise-dependent Dirichlet processes across $m$ \emph{known} sub-populations. Our CAM similarly employs common atoms to induce dependence across the $G_k^*$'s, but further allows clustering of distributional units, leading to a new model of nested random probability measures.
%\bch
%\citet{Hatjispyros2016} have previously investigated the idea of a common atoms structure between the realizations of a DP. In particular, they model each of $m<+\infty$ random densities as a finite mixture of dependent random measures characterized by common atoms. In this way, a parsimonious yet flexible alternative to the pairwise-dependent DP of \citet{Hatjispyros2011} is provided. 
%CAM borrows the idea of common atoms to induce dependence across the $G_k^*$'s of the random probability measure $Q$ in \eqref{CAM2}, thus overcoming the degeneracy issue of the nested Dirichlet process and leading to a new model of nested random probability measures.
%CAM differs from this model since it %it induces a DP on the distributions, hence
%models the random densities with an infinite mixture and no pairwise dependence is induced among the random probability measures. 
%The works by \citet{Hatjispyros2011,Hatjispyros2016} gives additional justification for the adoption of a common atoms structure for the $G_k^*$'s as they show that it leads to a  mixing measure  sufficiently flexible for several inferential goals.
%to approximate any infinite mixture. 
%Moreover, such a choice makes the computation of distances between random distributions intuitive.
Due to the commonality of the atoms at the unit level, our construction is also reminiscent of the Hierarchical Dirichlet process (HDP) by \citet{Teh2006}. However, there are crucial differences between the two constructions. More specifically, the HDP does allow a flexible representation of each unit-level distribution $G_j$, $j=1, \ldots, J$, but does not induce distributional clusters among the units. Our formulation preserves a two-layered clustering structure, across units (distributional clustering) and between observations %within each unit
(observational clustering). Thus, the proposed CAM  is closer in spirit to recently developed hierarchical topic models, where an HDP is adopted as a base measure of an (outer) DP, in symbols $Q\sim DP(\alpha, HDP(\beta, H))$ \citep{Paisley2015, Tekumalla2015}. However, those nested HDP formulations aim at describing topic distributions which can be obtained as mixtures of separate topics (i.e. a document may contain words typical of both medicine and sports news), whereas our objective is to cluster individual distributions and the observations wherein (a patient-specific distribution is not obtained as a mixture of other patients' distributions). Hence, our proposal closely mimics the intended purpose of the original nDP model.
Finally, we mention an alternative semi--parametric model recently developed by \citet{Mario20} that also avoids the degeneracy issue of the nDP and allows for distributional clustering by extending the hierarchical Dirichlet process of \citet{Teh2006}. With respect to the work by \citet{Mario20}, our proposal is fully nonparametric, yet computationally efficient, and it easily accommodates extensions to the clustering of count data.%, since our interest relies in  micorbiome data.}

\subsection{Partition structure and correlation}
\label{SEC::21PartitionAndCorrelation}
In the following, we investigate some important properties of the proposed CAM  in terms of partition structure and correlation across groups. In particular, we show how the model does not suffer from the theoretical degeneracy of the nDP. We also discuss the implied dependence between pairs of observations and distributions. 

The discreteness of the random probability measures in our model \eqref{CAM1}--\eqref{CAM2} induces ties at the observational level, whose corresponding partition can be described via the so-called partially Exchangeable Partition Probability Function (pEPPF) (see, e.g., \citet{CamerlenghiAOS} and references therein). For notational simplicity, we illustrate the main results by focusing on  $J=2$, but our strategy easily extends to the general case. We further assume that there are $s >0$ distinct values out of a sample $\bm{y}_1, \ldots , \bm{y}_J$, which will be denoted by $y_1^*, \ldots , y_s^*$, with corresponding frequencies $\bm{n}_j=(n_{1,j}, \cdots, n_{s,j})$, where $n_{i,j}$ indicates the number of times that the $i$-th distinct value $y_i^*$ has been observed out of the initial sample in unit $j$.
We denote by $\mathsf{P}_\mathbb{X}$ the space of all random probability measures on $\mathbb{X}$. Our first result %is in the same spirit of \citet[Proposition 2]{Camerlenghi2018}, and 
characterizes the mixed moments of the random probability measures $G_1$ and $G_2$ as a convex combination of the fully exchangeable case and a situation of independence across samples \citep[see also Proposition 2 in][]{Camerlenghi2018}. 
\begin{proposition} \label{prp:mixed_moments}
    Let $f_1$ and $f_2$ be two measurable functions defined on $\mathsf{P}_\mathbb{X}$ and taking values in 
    $\mathbb{R}^+$, then
    \begin{equation}
    \label{eq:mixed}
    \mathbb{E} \left[ \int_{\mathsf{P}_\mathbb{X}^2} f_1(g_1 ) f_2 (g_2) Q(d g_1) Q (d g_2) \right] =
    q_1 \mathbb{E} [f_1 (G_1^*)f_2 (G_1^*)] + (1-q_1) \mathbb{E} [f_1 (G_1^*) f_2 (G_2^*)]
    \end{equation}
    where we have set $q_1:= \mathbb{P} (G_1=G_2)$.
\end{proposition}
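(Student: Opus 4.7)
The plan is to exploit the almost sure discreteness of $Q=\sum_{k\geq 1}\pi_k \delta_{G_k^*}$ to turn the inner integral into a sum, and then exploit the structural independence between the stick--breaking weights $\{\pi_k\}$ and the atoms $\{G_k^*\}$, together with the exchangeability of the $G_k^*$'s.

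First I would rewrite the left-hand side by noting that, conditionally on $Q$, the integral $\int f_1(g_1)f_2(g_2)\,Q(dg_1)Q(dg_2)$ equals $\mathbb{E}[f_1(G_1)f_2(G_2)\mid Q]$ for $G_1,G_2\mid Q\stackrel{iid}{\sim}Q$. Substituting the stick-breaking representation gives
\begin{equation*}
\int_{\mathsf{P}_\mathbb{X}^2} f_1(g_1)f_2(g_2)\,Q(dg_1)Q(dg_2)=\sum_{k\geq 1}\pi_k^2 f_1(G_k^*)f_2(G_k^*)+\sum_{k\neq l}\pi_k\pi_l f_1(G_k^*)f_2(G_l^*).
\end{equation*}

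Next, I would take expectations of each piece separately. For the diagonal term, the weights $\{\pi_k\}$ are independent of the atoms $\{G_k^*\}$ by construction, and the $G_k^*$'s are identically distributed (they share the same GEM parameter $\beta$ and draw atoms from the same $H$). Hence each summand satisfies $\mathbb{E}[\pi_k^2 f_1(G_k^*)f_2(G_k^*)]=\mathbb{E}[\pi_k^2]\,\mathbb{E}[f_1(G_1^*)f_2(G_1^*)]$, and summing yields
\begin{equation*}
\mathbb{E}\Bigl[\sum_{k\geq 1}\pi_k^2\Bigr]\cdot\mathbb{E}[f_1(G_1^*)f_2(G_1^*)].
\end{equation*}
For the off--diagonal term, the pair $(G_k^*,G_l^*)$ with $k\neq l$ has a joint law that is exchangeable in its indices (again because the $\bm{\omega}_k$'s are i.i.d.\ GEM and the atoms $\theta_l$ are common), so $\mathbb{E}[f_1(G_k^*)f_2(G_l^*)]=\mathbb{E}[f_1(G_1^*)f_2(G_2^*)]$, and the factorization with $\pi_k\pi_l$ gives $\mathbb{E}[\sum_{k\neq l}\pi_k\pi_l]\cdot\mathbb{E}[f_1(G_1^*)f_2(G_2^*)]$.

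To conclude, I would identify $\mathbb{E}[\sum_k \pi_k^2]$ with $q_1$. Since $\mathbb{P}(G_1=G_2\mid Q)=\sum_k \pi_k^2$ (the two i.i.d.\ draws coincide iff they pick the same atom of $Q$), taking expectation gives $q_1=\mathbb{E}[\sum_k\pi_k^2]$. The complementary probability reads $\mathbb{E}[\sum_{k\neq l}\pi_k\pi_l]=1-q_1$ because $\sum_{k,l}\pi_k\pi_l=(\sum_k\pi_k)^2=1$ almost surely. Assembling the two contributions gives \eqref{eq:mixed}.

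The only delicate point is that the atoms $G_k^*$ are \emph{not} mutually independent across $k$, since they share the common sequence $\theta_1,\theta_2,\ldots$; the argument therefore cannot invoke independence but only exchangeability, which is what makes $\mathbb{E}[f_1(G_k^*)f_2(G_l^*)]$ depend on whether $k=l$ or not and produces the two-term decomposition. Everything else is a routine Fubini/measurability exercise, justified by nonnegativity of $f_1,f_2$.
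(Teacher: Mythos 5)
Your proof is correct and follows essentially the same route as the paper's: expand $Q\times Q$ via the stick-breaking representation, split into the diagonal ($k_1=k_2$) and off-diagonal ($k_1\neq k_2$) parts, factor the weight expectations from the atom expectations by Tonelli--Fubini, use that the $G_k^*$'s are identically distributed (and, pairwise for distinct indices, equal in law to $(G_1^*,G_2^*)$ despite sharing atoms), and identify $\mathbb{E}[\sum_k\pi_k^2]=q_1$. Your explicit caveat that one can only invoke exchangeability, not independence, of the $G_k^*$'s is exactly the point the paper also makes.
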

Following  \citet{CamerlenghiAOS}, we formally define the pEPPF as the probability of the observed allocation $\{\bm{n}_1,\ldots , \bm{n}_J\}$  of $s>0$ distinct observations out of the available sample, i.e.
\begin{equation}
\label{eq:pEPPF_def}
\Pi_N^{(s)} (\bm{n}_1,\ldots , \bm{n}_J) := \mathbb{E}\int_{\mathbb{X}^s} \prod_{j=1}^J \prod_{i=1}^s
G_j^{n_{i,j}} (d y_i^*) ,
\end{equation}
with $N= \sum_{j=1}^J n_j$. We point out that the \(i\)-th distinct value is shared by any two units \(j\) and \(\kappa\) if and only if \(n_{i, j}\, n_{i,\kappa} \geq 1\). If $J=1$ one obtains the usual exchangeable partition probability function (EPPF) for an individual sample, defined by \citep{Pitman1995}, and denoted here as 
$\Phi_{n_j}^{(s)}(\bm{n}_j)$. In the case of the Dirichlet process, this coincides with the well--known Ewens sampling formula,
\(
\Phi_{n_j}^{(s)}(\bm{n_j}) = \frac{\alpha^s \Gamma (\alpha)}{\Gamma (\alpha +n_j)} \prod_{i=1}^s (n_{i,j}-1)!
\)  \citep{Ewens72}. 
The pEPPF for the CAM  is described by the following theorem, for the case $J=2$.
\begin{theorem}
    \label{thm:EPPF}
    Let $\bm{y}_1$ and  $\bm{y}_2$ be samples from $J=2$ experimental units under the CAM  \eqref{CAM1}--\eqref{CAM2}. Then, the induced random partition of $s>0$ distinct observations  may be expressed as
    \begin{equation}
    \label{eq:pEPPF_CAM}
    \Pi_N^{(s)} (\bm{n}_1, \bm{n}_2) = q_1 \Phi_{n_1+n_2}^{(s)} (\bm{n}_1 +\bm{n_2}) +(1-q_1)
    \int_{\mathbb{X}^s} \mathbb{E} \prod_{j=1}^2 \prod_{i=1}^s (G_j^*)^{n_{i,j}} (d y_i^*).
    \end{equation}
\end{theorem}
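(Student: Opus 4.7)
The plan is to reduce the inner expectation in the definition of $\Pi_N^{(s)}$ to an application of Proposition \ref{prp:mixed_moments}. Starting from \eqref{eq:pEPPF_def} with $J=2$, I would use Fubini to bring the $\mathbb{X}^s$-integral outside the expectation. For each fixed $(y_1^*,\ldots,y_s^*)$, the integrand factorizes as $\tilde f_1(G_1)\tilde f_2(G_2)$ with
$$\tilde f_j(g) = \prod_{i=1}^s g^{n_{i,j}}(dy_i^*), \qquad j=1,2,$$
so Proposition \ref{prp:mixed_moments} applies directly and splits the inner expectation into a \emph{coincident} term (with mass $q_1$, where $G_1=G_2=G_1^*$) and a \emph{non-coincident} term (with mass $1-q_1$, where $G_1=G_1^*$ and $G_2=G_2^*$ are independent draws from the common-atoms law).

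For the coincident term, the almost-sure discreteness of $G_1^* = \sum_{l \geq 1} \omega_{l,1}\, \delta_{\theta_l}$ yields the atom-by-atom identity
$$(G_1^*)^{n_{i,1}}(dy_i^*)\,(G_1^*)^{n_{i,2}}(dy_i^*) = (G_1^*)^{n_{i,1}+n_{i,2}}(dy_i^*),$$
so after integrating over $\mathbb{X}^s$ this term becomes the EPPF of a single sample of size $n_1+n_2$ drawn from $G_1^*$. Since the weights have distribution $GEM(\beta)$ over atoms drawn i.i.d.\ from the non-atomic $H$, we have $G_1^* \sim DP(\beta, H)$, and its EPPF is given by the Ewens sampling formula at frequencies $\bm{n}_1 + \bm{n}_2$, delivering the term $\Phi_{n_1+n_2}^{(s)}(\bm{n}_1+\bm{n}_2)$. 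The non-coincident term is already in the form required by \eqref{eq:pEPPF_CAM}, so no further simplification is needed.

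The main care point will be the measure-theoretic interpretation of the symbol $g^n(dy)$ for the atomic $g$'s in our model: since every $g$ drawn from $Q$ is almost surely of the form $\sum_l \omega_l\, \delta_{\theta_l}$ with $\theta_l$ i.i.d.\ from the non-atomic base $H$, the factor $g^n(dy)$ has to be read as $\sum_l \omega_l^n\, \delta_{\theta_l}(dy)$. This reading is what legitimizes both the Fubini exchange and the atom-wise collapse identity above, and it also ensures that the two $G_j$'s in the coincident case genuinely share atoms so that exponents add. Everything else, including the symmetry under relabeling of the $G_k^*$'s and the independence of $\bm{\pi}$ from $(G_k^*)_{k\geq 1}$ used implicitly in Proposition \ref{prp:mixed_moments}, is built into the stick-breaking construction stated in \eqref{CAM2}.
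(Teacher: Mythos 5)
Your proposal is correct and follows essentially the same route as the paper's proof: apply Proposition \ref{prp:mixed_moments} with $f_j(g)=\prod_{i=1}^s g^{n_{i,j}}(dy_i^*)$ for fixed distinct values, then integrate over $\mathbb{X}^s$, with the coincident ($G_1=G_2$) term collapsing to the fully exchangeable EPPF $\Phi_{n_1+n_2}^{(s)}(\bm{n}_1+\bm{n}_2)$. Your additional remarks on reading $g^{n}(dy)$ atom-wise and identifying the first term via the Ewens formula for $G_1^*\sim DP(\beta,H)$ simply make explicit what the paper leaves implicit.
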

Although a closed form expression is not available,  due to the presence of the integral over  $\mathbb{X}^s$ on the right hand side, the result is  fundamental to show  that the proposed CAM does not reduce to the fully exchangeable case in the presence of common observations across the two samples. Indeed,  we can prove the following:
\begin{proposition}  \label{prp:noEX}
    Assume that two samples $\bm{y}_1$ and $\bm{y}_2$ share $s_0 >0$ distinct observations, then
    \[
    \int_{\mathbb{X}^s} \mathbb{E} \prod_{j=1}^2 \prod_{i=1}^s (G_j^*)^{n_{i,j}} (d y_i^*) >0 .
    \]
\end{proposition}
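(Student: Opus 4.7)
The plan is to exploit the common atoms structure of the CAM, which is the key distinguishing feature compared to the nDP: since $G_1^*$ and $G_2^*$ are built on the shared sequence $\{\theta_l\}_{l \geq 1}$, any coincidence at a common atom produces a strictly positive contribution to the integral, which is precisely what fails under the nDP.

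First, I would expand each factor using the stick-breaking representation, writing $G_k^* = \sum_{l \geq 1} \omega_{l,k} \delta_{\theta_l}$ with identical atoms across $k$. For each index $i$ with $n_{i,1}, n_{i,2} \geq 1$, the formal product $\prod_{j=1}^2 (G_j^*)^{n_{i,j}}(\mathrm{d}y_i^*)$ is interpreted, using the discreteness of both measures and their shared atoms, as $\sum_{l \geq 1} \omega_{l,1}^{n_{i,1}} \omega_{l,2}^{n_{i,2}}\, \delta_{\theta_l}(\mathrm{d}y_i^*)$. For indices with $n_{i,j} = 0$ the corresponding factor is trivial ($\omega_{l,j}^0 = 1$). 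The non-atomicity of $H$ guarantees that the $\theta_l$'s are almost surely distinct, which takes care of any implicit distinctness constraint on the $y_i^*$'s.

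Next, I would obtain a concrete lower bound by retaining, within each atomic sum, only the term $l = i$, i.e. by choosing $y_i^* = \theta_i$ for $i = 1, \ldots, s$ (these choices are a.s. distinct by non-atomicity of $H$, so they are consistent with the pEPPF integral):
\[
\int_{\mathbb{X}^s} \prod_{j=1}^2 \prod_{i=1}^s (G_j^*)^{n_{i,j}}(\mathrm{d}y_i^*) \;\geq\; \prod_{i=1}^s \prod_{j=1}^2 \omega_{i,j}^{n_{i,j}}.
\]
Finally, I would take $\mathbb{E}[\,\cdot\,]$ on both sides. Since $\bm{\omega}_1$ and $\bm{\omega}_2$ are independent $\mathrm{GEM}(\beta)$ sequences (as weights of two distinct atoms of $Q$), the expectation of the right-hand side factorises as $\prod_{j=1}^2 \mathbb{E}\!\left[\prod_{i=1}^s \omega_{i,j}^{n_{i,j}}\right]$; each factor is a finite product of positive moments of $\mathrm{Beta}(1,\beta)$ stick-breaking variables and is therefore strictly positive. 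This yields the claim.

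The main obstacle is making rigorous sense of the formal product of atomic measures $\prod_{j=1}^2 (G_j^*)^{n_{i,j}}(\mathrm{d}y_i^*)$ on the diagonal. Once this is settled through the discrete common-atom decomposition above, the rest is elementary; note that the assumption $s_0 > 0$ is not used in the positivity argument itself, consistently with the fact that its role is to highlight the contrast with the degenerate nDP behaviour that makes the statement informative.
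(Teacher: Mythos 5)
Your argument is correct and follows essentially the same route as the paper's proof: expand $G_1^*$ and $G_2^*$ over the common atoms, observe that any value shared by the two samples must sit on a common atom (with distinct $y_i^*$'s corresponding a.s.\ to distinct atom indices by non-atomicity of $H$), and reduce the claim to the strict positivity of expectations of products of a.s.\ positive GEM weights. The only difference is cosmetic: the paper works out the full sum over all distinct atom-index assignments, arriving at the explicit expression \eqref{eq:2term_pEPPF}, whereas you lower-bound the integral by the single term $l_i=i$, which is enough for positivity (and your remark that $s_0>0$ is not needed for positivity itself, only for the contrast with the nDP, is consistent with that expression).
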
 
Theorem \ref{thm:EPPF} and Proposition \ref{prp:noEX} clarify that the pEPPF \eqref{eq:pEPPF_CAM} of our proposal does not reduce to the EPPF of the full exchangeable model. The proofs of the previous results are deferred to the Supplementary Material, where we also provide an explicit expression for the integral in \eqref{eq:pEPPF_CAM} (see Equation \eqref{eq:2term_pEPPF}).% such an expression clarifies that this term is positive even if $s_0>0$.

Of course, ties among distributions at the outer level are still possible in view of the discreteness of $Q$ in \eqref{CAM2}. Indeed, if $j \not = j'$ we have
    \begin{equation}
    \begin{aligned}
    \mathbb{P}\left(G_j=G_{j'}|Q\right) = \sum_{k \geq 1} \pi_k^2>0, \quad \text{ and } \quad
    \mathbb{P}\left(G_j=G_{j'}\right) = \frac{1}{1+\alpha}.
    \end{aligned}
    \label{coclustG}
    \end{equation}
    Moreover, the probability of a tie between two data points in two separate units $j$ and $j'$, with $j \not = j'$, can be computed as
    \begin{equation}
    \begin{aligned}
    \Prob{y_{i,j} =y_{i',j'} }&=\frac{1}{1+\alpha}\left[\frac{1}{1+\beta} + \alpha\frac{1}{2\beta+1}\right]. %= Corr\left( x_{ij} ,x_{i'j'} \right).
        \end{aligned}
    \label{xijxij}
    \end{equation}
    This shows that CAM induces a two-fold clustering structure: it clusters together experimental units characterized by similar distribution profiles, and it also clusters together observations, allowing for borrowing information across the two layers.
    The determination of \eqref{coclustG}--\eqref{xijxij} is also deferred to the Supplementary Material.\\
    We conclude this section providing an explicit expression of the correlation between $G_j$ and $G_{j'}$ on different Borel sets, as $j \not = j'$; the covariance and correlation are useful quantities to  investigate the dependence across random probability measures and their suitability for practical applications. For any two Borel sets $A,B \in \mathcal{X}$ one has
\begin{equation} \label{eq:cov}
%Cov\left(G_j(A),G_{j'}(B)\right) = H(A \cap B)\left( \frac{q_1}{1+\beta}+\frac{1-q_1}{1+2\beta} \right)+
%H(A) H (B) \left( -\frac{q_1}{1+\beta}-\frac{1-q_1}{1+2\beta} \right),
\begin{split}
& Cov\left(G_j(A),G_{j'}(B)\right) \\
& \qquad= H(A \cap B)\left( \frac{q_1}{1+\beta}+\frac{1-q_1}{1+2\beta} \right)-
H(A) H (B) \left( \frac{q_1}{1+\beta}+\frac{1-q_1}{1+2\beta} \right), 
\end{split}
\end{equation}
where $q_1=(1+\alpha)^{-1}$. In particular the correlation on the same set $A$ equals
\begin{equation} \label{eq:corr_same}
    \rho_{j,j'}:=Corr(G_j(A),G_{j'}(A)) = 1 - \frac{\beta}{2\beta+1}\,\frac{\alpha}{1+\alpha}.
\end{equation}
See Section \ref{Proof} of the Supplementary Material for the derivation of \eqref{eq:cov} and \eqref{eq:corr_same}.
It is interesting to note that $\rho_{j,j'}\in\left(1/2,1\right)$, due to the commonality of the atoms. In many applications, especially in genomics, distribution profiles  are expected to be quite similar across experimental units (e.g., subjects), and to vary only for a small fraction of the observations (e.g., genes). For the nDP, we have that  $Corr\left(G_j(A),G_{j'}(B)\right)= (1+\alpha)^{-1}>0$, where the expression does not depend on $\beta$: this is because the nDP assumes independence between atoms in separate distributions.%\bch Given the commonality of the atoms, one can follow \citet{Hatjispyros2016} and compute distances as an additional tool for better representing the similarities  between the random measures. Consider the squared $L_2$ distance, defined as $d_2(f,g)=\int_{\mathcal{X}} (f(x)-g(x))^2 dx$. It can be proven that
%\[ \mathbb{E}\left[d_2(G_j,G_{j'})\right]=\frac{2\alpha}{1+\alpha}\left[\frac{1}{1+\beta}-\frac{1}{1+2\beta}\right]  \]
%and that $\mathbb{E}\left[d_2(f_j,f_{j'})\right]\propto \mathbb{E}\left[d_2(G_j,G_{j'})\right]$ where for the proportionality constant is $\int_{\mathbb{X}}Var_{\theta}(f(x|\theta))$.
%GIUSTO? distanza fra densità simile a Hatijsp?
%\ech
%\textcolor{blue}{Federico: perch\'e? non ho capito: cosa sono $f_j$? }
\subsection{Common Atoms Mixture Model}\label{Sec:CAMM}

The model defined through Equations \eqref{CAM1}--\eqref{CAM2} assumes a.s. discrete distributions. For modeling continuous distributions, one could follow established literature \citep{Ferguson83,Lo1984} and consider a nonparametric mixture model where \eqref{CAM1} is substituted by
\begin{equation}
\begin{aligned}
(y_{i_1,1},\ldots, y_{i_J, J})|	\, f_1,\ldots, f_J &\overset{ind.}{\sim} f_1\times\cdots\times f_J \qquad i_j=1,\ldots, n_j, \; j=1, \ldots , J\\
 f_j\left(\cdot\right)&=\int_{\Theta}\, p(\cdot|\theta)\,G_j(d\theta),\quad j=1, \ldots, J,
\end{aligned}
\label{CAMM1}
\end{equation}
where $p(\cdot|\theta)$ denotes an appropriate parametric continuous kernel density, and $G_j|Q\stackrel{i.i.d.}{\sim} Q$ as in \eqref{CAM2}.  In the rest of the paper, we will adopt Gaussian kernels, i.e. we assume $p\left(\cdot|\theta\right)$  to be Normal and $\theta=\left(\mu,\sigma^2\right)$ is a vector of location and scale parameters.

To simplify the computational algorithm,  we can introduce an alternative representation using two sequences of latent variables, $\bm{S}=\{S_j\}_{j\geq 1}$ and $\bm{M}=\{M_{i,j}\}_{i \geq 1,j\geq 1}$, describing -- respectively -- the clustering process at the distributional level %\bch ha senso dirlo qui? (e.g. among subjects) \ech 
and the observational level %\bch (e.g. among microbes)\ech, 
i.e. $S_j=k$ and $M_{i,j}=l$ if the observation $i$ in unit $j$ is assigned to the $l$-th observational cluster and the $k$-th distributional cluster. Thus we deal with the following model:
\begin{equation}
\begin{aligned}
y_{i,j}|\bm{ M, \theta} & \sim  N\left(\cdot | \theta_{M_{i,j}}\right), \quad
&M_{i,j}|\bm{S,\omega}  \sim \sum_{l=1}^{\infty} \omega_{l, S_j} \delta_l(\cdot),\\
\bm{\omega}_k | \bm{S}=\bm{\omega}_k &\sim GEM(\alpha), \quad
&S_{j}|\bm{\pi}  \sim \sum_{k=1}^{\infty} \pi_{k} \delta_k(\cdot),\\
\bm{\pi} &\sim GEM(\beta), \quad
&\theta_{l} \sim \pi(\theta_{l}),\:\: l\geq1,
\end{aligned}
\label{Membership}
\end{equation}
where we denoted with $\boldsymbol{\theta}=\{\theta_l\}_{l\geq 1}$. In the following, we consider a Normal-Inverse Gamma distribution for  $\theta_l= \left(\mu_l,\sigma^2_l\right)\sim NIG(m_0,\kappa_0,\alpha_0,\beta_0)$, i.e.  $\mu_{l}|\sigma^2_{l} \sim N\left(m_0, \sigma_{l}^2/\kappa_0\right)$ and $\sigma^2_{l} \sim IG\left(\alpha_0, \beta_0\right).$
 
\subsection{Common Atoms Model for Count Data}\label{Sec3:CAM_count}
In Section \ref{Sec:MicrobApp}, we consider an  application to microbiome data, which can be represented by abundance tables containing the observed frequency of a particular microbial sequence in a sample - or subject (unit). Here, we describe how the CAM  can be adapted to count data, characterized by skewness and zero-inflation typically observed in microbiome studies. Let $z_{i,j} \in \mathbb{N}$ be the observed count of  microbial sequence $i=1, \ldots, n_j$ in subject $j=1, \ldots, J$. Consequently, the vector $\bm{z}_j = \left( z_{1,j}, \ldots, z_{n_j,j}  \right)$ will denote the observed microbiome abundance vector of individual $j$. We embed model \eqref{CAM1}--\eqref{CAM2} in the rounded mixture of Gaussian framework of \citet{Canale2011}. See also \citet{Reich} and \citet{Canale2017a}, where the rounded mixture framework is compared to less flexible nonparametric mixtures of Poisson densities for count data. In order to define a probability mass function for the discrete measurements $z$, \citet{Canale2011}  consider a data augmentation framework by latent continuous variables $y$, such that
\begin{equation*}
f%_Z
\left(Z=j\right)=\int_{a_j}^{a_{j+1}}g\left(y\right)dy,  \:\:\: j\in\mathbb{N}
\end{equation*}
for a fixed sequence of thresholds $a_0<a_1<a_2<\ldots< a_{\infty}$ and for some density function $g(\cdot)$, such that $\int_{a_0}^{a_{+\infty}}g\left(y\right)dy=1$. Typically, the sequence of thresholds is set as  $\mathbf{a}=\{a_j\}_{j=0}^{+\infty}=\{-\infty,0,1,2,\ldots,+\infty\}$ and $g(\cdot)$ is a Dirichlet Process mixture density, to ensure a flexible representation of the table of counts. We propose a novel nested formulation, where $g(\cdot)$ is modeled as a CAM mixture \eqref{Membership}. More specifically, we consider
\begin{equation}
z_{i,j}|y_{i,j} \sim \sum_{g=0}^{+\infty}\delta_g(\cdot) \mathbf{1}_{ \left[a_g,a_{g+1}\right)}\left(y_{i,j}\right), \quad \quad
\label{newLik}
\end{equation}
where $y_{i,j}$ is distributed as in \eqref{Membership}. %Thus, 
%\begin{align*}
%\pi\left(z|\bm{ M, \mu,\sigma^2 }\right) &= \int_{a_{z}}^{a_{z+1}} \phi\left(y;\mu_{M_{ij}},\sigma^2_{M_{ij}}\right)dy= \Phi\left(a_{z+1}; \mu_{M_{ij}},\sigma^2_{M_{ij}}\right)-\Phi\left(a_{z}; \mu_{M_{ij}},\sigma^2_{M_{ij}}\right)%\\&=:\Delta\Phi\left(a_{z};\mu_{M_{ij}},\sigma^2_{M_{ij}} \right).\label{DCAM}
%f\left(z|\bm{ M},\bm{\theta}\right) &= \int_{a_{z}}^{a_{z+1}} \phi\left(y;\theta_{M_{i,j}}\right)dy= \Phi\left(a_{z+1}; \theta_{M_{i,j}}\right)-\Phi\left(a_{z}; \theta_{M_{i,j}}\right)=\Delta\Phi\left(a_{z}; \theta_{M_{i,j}}\right)
%\end{align*}
%where $\phi$ and $\Phi$ denote the p.d.f. and the c.d.f. of a standard normal random variable, respectively.
We will refer to this new setting as the Discrete Common Atoms Model (DCAM).

\section{Posterior Inference}
\label{SEC::4Post}
Typically, posterior samples for the nDP process have been obtained using a  truncated version of the Blocked-Gibbs Sampler \citep{Ishwaran2001a}, i.e. by choosing proper upper bounds for the infinite sums  that appear in \eqref{Membership}. Specifically, the model representation in \eqref{Membership} is useful to obtain  such an algorithm, which we detail in Section \ref{app:algo} of the Supplementary Material, where we also provide useful upper bounds to control the resulting truncation error. Here we present a novel nested version of the independent slice-efficient algorithm \citep{Walker2007, Kalli2011}. Compared to truncation-based algorithms, the proposed slice sampler has two main advantages: it allows to target  the true posterior distribution and it considerably decreases the computational time by stochastically truncating the model at the needed number of mixture components. 
The proposed slice sampling scheme can be easily extended to the nDP, and is related to the 
sampling scheme in \citet{Murray2013},  although their model is essentially different from ours.  In the following, we focus on the Common Atoms Mixture model \eqref{Membership}, as variations to accommodate for count data are straightforward. 

%\textbf{Upper bounds for truncation errors.} Once we introduce the membership indicators, it is clear that the model involves infinite sums. Usually this sums are truncated to proper integers $L$ and $K$, introducing bias but making the computations feasible \citep{Ishwaran2001a}. We are able to recover upper bounds for the distance in total variation between the true models and their truncated versions for both \Cref{CAM1} and \Cref{CAMM1}:
%\begin{align}
%\E{d^{CAM}_{TV}\left(G_i,G_i^{(K,L)} \right) } &\leq\textcolor{red}{ \left(1-\left(\frac{\alpha}{1+\alpha}\right)^L\right)}    \left(\frac{\beta}{1+\beta}\right)^K+\left(\frac{\alpha}{1+\alpha}\right)^L\\
%\E{d^{CAMM}_{TV}\left(G_i,G_i^{(K,L)} \right) } &\leq n\left[ \left(\frac{\beta}{1+\beta}\right)^K+\left(\frac{\alpha}{1+\alpha}\right)^L\right]
%\end{align}
 %\textcolor{red}{Qui secondo me la notazione non \'e consistente: $f$ prima indicava la mistura e il kernel veniva indicato con $p$, ora il kernel \'e indicato con $f$... Quindi c'\'e da sostituire   $f\left(y_{i,j}|\theta_{i,j}\right)$ con $p\left(y_{i,j}|\theta_{i,j}\right)$?? Nel caso c'\'e da cambiare notazione anche nel supplementary--- Prestare attenzione\\}
Let $p\left(y_{i,j}| \theta_l%{i,j}
 \right)$ denote a generic density function for the observation $y_{i,j}$, conditionally given $\theta_l$, let $\bm{\pi}=\{\pi_k\}_{k\geq 1}$ and $\bm{\omega}=\{\omega_{l,k}\}_{l,k \geq 1}$ %\left(\bm{\omega}_{1},\bm{\omega}_2,\ldots\right)$ 
be the two sets of weights, one referred to the distributional clusters, the other one referred to the observational clusters. Then, we can write:
\begin{equation*}
f\left(y_{i,j}|\bm{\theta},\bm{\omega},\bm{\pi}\right) = %\sum_{k\geq 1}\pi_k p\left(y_{i,j}|G^*_k\right)=
 \sum_{k\geq 1}\pi_k\, \sum_{l\geq 1} \,\omega_{l,k} \, p\left(y_{i,j}|\theta_l\right).
\end{equation*}
%Notice that the NDP can be recovered setting $f\left(y_{i,j}|\theta_{l,k}\right)$ as likelihood, assuming a different set of atoms for each distributional cluster.

As in the classic slice sampler, we augment the model introducing two sets of latent variables controlling which components of the mixture are ``active'' and which can be ignored. More specifically, we introduce $\bm{u^D}=\{u^D_j\}_{j=1}^J$%\left(u^D_1,\ldots,u^D_J\right)$
-- where the D in the superscript indicates the distributional level -- and, within every unit $j=1,\ldots,J$, we define an inner sets of latent variables, %$\bm{u_j^O}=\left(u^O_{1j},\ldots,u^O_{n_jj}\right)$
$\bm{u_j^O}=\{u^O_{i,j}\}_{i=1}^{n_j}$, at the level of the observations. 
Moreover, we also consider the following deterministic sequences: $\bm{\xi}^D=\{\xi^D_k\}_{k\geq 1}$ and, for every $k$, $\bm{\xi_k}^O=\{\xi^O_{l,k}\}_{l\geq1}$. Then the model can be rewritten as
\begin{equation}
f_{\bm{\xi}^D,\bm{\xi}^O}\left(y_{i,j},u^D_j,u^O_{i,j}|\bm{\theta},\bm{\omega},\bm{\pi}\right) =  \sum_{k\geq 1}
\mathds{1}_{\{ u_j^D < \xi^D_k\}} \frac{\pi_k}{\xi^D_k}
\sum_{l\geq 1} \mathds{1}_{\{ u_{i,j}^O < \xi^O_{l,k}\}}\frac{\omega_{l,k}}{\xi^O_{l,k}}  p\left(y_{i,j}|\theta_l\right).
\end{equation}
Notice that if we assume $\xi^D_k=\pi_k$ and $\xi^O_{l,k}=\omega_{l,k}$, we recover the nested version of the efficient-dependent slice sampler. %, as presented in \citet{Papaspiliopoulos2008}.
By introducing two sets of latent labels that identify the distributional ($\bm{S}$) and observational ($\bm{M}$) cluster in which the observation is allocated, we  get rid of the infinite sums in the previous equations. The complete likelihood for the entire dataset becomes %a single observation becomes
%\begin{equation}
%p_{\bm{\xi}^O,\bm{\xi}^I}\left(y_{ij},u^O_j,u^I_{ij},M_{ij}, S_j|\bm{\theta},\bm{\omega},\bm{\pi}\right) =
%\mathds{1}_{\{ u_j^O < \xi^O_{S_j}\}} \frac{\pi_{S_j}}{\xi^O_{S_j}}
%\mathds{1}_{ \{ u_{ij}^I < \xi^I_{ M_{ij} S_j }\}}\frac{\omega_{ M_{ij} S_j }}{\xi^I_{ M_{ij} S_j }}   f\left(y_{ij}|\theta_ {M_{ij}}\right),
%\end{equation}
%while the entire sample is modeled by:
\begin{equation}
\begin{split}
& f_{\bm{\xi}^D,\bm{\xi}^O}\left(\bm{y},\bm{u^D},\bm{u^O},\bm{M}, \bm{S}|\bm{\theta},\bm{\omega},\bm{\pi}\right)\\
& \qquad\qquad =  \prod_{j=1}^{J}
\mathds{1}_{\{ u_j^D < \xi^D_{S_j}\}} \frac{\pi_{S_j}}{\xi^D_{S_j}} \prod_{i=1}^{n_j}
\mathds{1}_{ \{ u_{i,j}^O < \xi^O_{ M_{i,j} ,S_j }\}}\frac{\omega_{ M_{i,j}, S_j }}{\xi^O_{ M_{i,j}, S_j }}   p\left(y_{i,j}|\theta_ {M_{i,j}}\right).
\end{split}
\label{LIKSLICE}
\end{equation}
Let $\phi(\cdot|\theta)$ and $\Phi(\cdot|\theta)$ denote the p.d.f. and the c.d.f. of a normal random variable with location-scale parameter $\theta$, respectively.
Then, if we assume $ p(y_{i,j}|\theta_{M_{i,j}})=\phi(y_{i,j}|\theta_{M_{i,j}})$ we recover the CAM model listed in \eqref{CAMM1}.
Alternatively, to recover the DCAM model for discrete data $\bm{z}$ as in \eqref{newLik}, it is sufficient to adopt the mixing kernel $p\left(z_{i,j}|\theta_{M_{i,j}}\right)=\Delta\Phi(a_{z_{i,j}};\theta_{M_{i,j}})=\Phi\left(a_{z+1}; \theta_{M_{i,j}}\right)-\Phi\left(a_{z}; \theta_{M_{i,j}}\right)$, obtained by integrating out the latent continuous variable. In a general framework, the nested slice sampler is obtained by looping over the full conditionals for $T$ iterations, according to the pseudo-code reported in Algorithm 1. For the DCAM, an additional step is added to update the latent continuous variable (see Step 1 of the algorithm in %Section B.1  of 
the Supplementary Material).
The computation of Steps 5, 6, and 7 is feasible, as we  stochastically truncate the number of mixture components to a sufficiently high integer to ensure that the two steps can be carried out exactly. Additional details for this procedure are reported in the Supplementary Material.

\begin{algorithm}[]
\SetAlgoLined
{
%\KwInput{Initial values for the MCMC. Denote with $\bm{V}$ the vector containing all the variables of model \eqref{MembershipTRUNC}, and let $\bm{V}^{-\bm{s}}$  be the same vector $\bm{V}$ with the variable $\bm{s}$ removed.}
%\KwOutput{Posterior MCMC sample for the parameters of interest.}
\For{$i=1,\ldots,T$}{
    1. Sample each $u_j^D$ from a uniform distribution $\mathcal{U}\left(0,\xi^D_{ S_j }\right)$.\\
    2. Sample each $u_{i,j}^O$ from a uniform distribution $\mathcal{U}\left(0,\xi^O_{ M_{i,j}, S_j }\right)$.\\
    3. Sample the proportions $\bm{v}$ for the SB weights independently from $v_k\sim Beta\left(a_k,b_k\right)$, where $a_k = 1+ \sum_{j=1}^{J}\mathds{1}_{\{S_j=k\}}$ and $b_k = \alpha+ \sum_{j=1}^{J}\mathds{1}_{\{S_j>k\}}$. This full conditional is obtained marginalizing $\bm{u^D}$ out.\\
    4. For each $k$, sample the proportions $\bm{u_k}$ independently from $u_{l,k}\sim Beta\left(a^k_l,b^k_l\right)$, where $a^k_l = 1+ \sum_{i=1}^{N}\mathds{1}_{\{M_{i,j}=l,S_j=k\}}$ and $b^k_l = \beta+ \sum_{i=1}^{N}\mathds{1}_{\{M_{i,j}>l,\:S_j=k\}}$. This full conditional is obtained collapsing both $\bm{u^D}$ and $\bm{u^O}$.\\
    %5.% Sample the distributional labels from the following full conditional distribution:
    %$$ \mathbb{P}\left(S_j=k|\cdots \right)\propto
    %\mathds{1}_{\{ u_j^D < \xi^D_{k}\}} \frac{\pi_{k}}{\xi^D_{k}} \prod_{i=1}^{n_j}
    %\mathds{1}_{ \{ u_{i,j}^O < \xi^O_{ M_{i,j} ,k }\}}\frac{\omega_{ M_{i,j} ,k }}{\xi^O_{ M_{i,j}, k }}.$$
    5. Following \citet{Murray2013,Porteous2005}, we obtain more efficient updates trough partial collapsing, integrating over the inner level slice variables $\bm{u^O}$. Then, we sample from
    $$ \mathbb{P}\left(S_j=k|\cdots\right)\propto
    \mathds{1}_{\{ u_j^D < \xi^D_{k}\}} \frac{\pi_{k}}{\xi^D_{k}} \prod_{i=1}^{n_j}
    \omega_{ M_{i,j} , k }.$$\\
    
%     \begin{comment}
%     We can additionally collapse the stick breaking weights, restoring the dependence among the cluster labels $\bm{S}$. Denote with $\bm{S}^{(-j)}$ the vector of all the distributional cluster labels except for the $i$-th, and with $n_k^{(-j)}$ the number of observations in the $k$-th distributional cluster belonging to the $l$-the observational cluster equal to $k$.
    
%     $$ p\left(S_j=k|\bm{S}^{(-j)}\cdots -\bm{u^I}-\bm{\omega}\right)\propto
%     \mathds{1}_{\{ u_j^O < \xi^O_{k}\}} \frac{\pi_{k}}{\xi^O_{k}} \prod_{i=1}^{n_j}
%     \frac{}{\prod}.$$
%         \end{comment}
    
    6. Sample the observational labels from the following full conditional distribution:
    $$ \mathbb{P}\left(M_{i,j}=l|\cdots \right)\propto
    \mathds{1}_{ \{ u_{i,j}^O < \xi^O_{ l ,S_j }\}}\frac{\omega_{ l, S_j}}{\xi^O_{l ,S_j }}  p\left(y_{i,j}|\theta_l\right).$$\\
    7. Sample $\theta_l$ from a conjugate NIG.
 }
 \caption{Nested Slice-Efficient  Sampler for the Common Atoms Model}
}
\end{algorithm}
\vspace{.5cm}

\section{Analysis of microbial distributions of African Americans and rural Africans}\label{Sec:MicrobApp}
We apply the proposed modeling framework to the analysis of a microbiome dataset. Here, a primary goal is to study \emph{microbial diversity}, i.e. how the distribution of microbial units varies across subgroups of a population. %\bch Several diseases have been associated with decreased microbiome diversity \citep{Morgan2012}.\ech 
Typically, summary statistics are used to capture characteristics of species' distributions, e.g. $\alpha$-diversity and $\beta$-diversity metrics such as Shannon's entropy and Bray-Curtis dissimilarity indexes, respectively  \citep{Whittaker2006}. However, those metrics do not fully capture the complexity of microbiome data, which poses distinctive statistical challenges \citep{Mao2017}. In particular, the data are recorded as counts of the observed microbial genome sequences. The resulting histograms are highly skewed and sparse, due to the many low- or zero- frequency counts and to the presence of a few dominant sequences (see Figure \ref{fig:histos}). Indeed, when compared across subjects, microbiota abundance data show a characteristic zero-inflation. %\bch tutta questione dei due soggetti ripresa and dopo, toglierei/farei unione  Figure \ref{fig:histos} reports a snapshot of the observed microbial distributions for two representative individuals from the dataset we analyze in Section \ref{Sec:MicrobApp}. The two subjects have been assigned with high probability to two different population subgroups by the proposed CAM model. In addition to the typical microbiome distributional features discussed above, we note that the two distributions share many common atoms, and they are quite similar except for the presence of a small set of sequences that appear with high frequency.\ech\\
%The increased availability of high-throughput sequencing techniques has allowed researchers to investigate the impact of the human microbiome and its diversity on our health with increasing detail \citep[see, e.g.,][]{Dinan2013,Ley2010,Goodman2018,Kau2011} \bch taglierei qualcuna \ech. 

The taxonomical classification of microbial species is typically conducted based on sequence alignments, e.g. through the use of 16S rRNA sequences:   ``practically identical'' sequenced tags ($\geq95\%$ of degree of similarity) are clustered together into the same \emph{phylotype}, and  referred to as an \emph{operational taxonomic unit} (OTU). Thus, for each specimen (e.g. fecal sample) obtained from a particular ecosystem (e.g. the gut), the number of recurrences of each OTU  is recorded \citep{Jovel2016, Kaul2017}. Collecting samples from distinct individuals leads to the construction of an \textit{abundance table}, a matrix formed by the OTU counts (taxa) observed in each sample. Let $\bm{Z}$ indicate a $n\times J$ abundance table where each entry $z_{i,j} \in \mathbb{N}$ is the frequency of the $i$-th OTU observed in the $j$-th subject, $i=1, \ldots, n$, $j=1, \ldots, J$, where $n$ represents the total number of OTUs. Thus, the vector $\bm{z}_j = \left( z_{1,j}, \ldots, z_{n, j}  \right)'$ denotes the observed microbiome sample  of individual $j$. 

%Due to the sampling mechanism and the heterogeneity of the microbiome in the population, the observed distribution of the OTU counts is typically skewed and over-dispersed: very few important microbes show a very high frequency, while a vast number of OTUs have been recorded just a few times or have never been observed at all (see Figure \ref{fig:histos}). 

%\bch Questa parte credo si possa rimuovere...\citet{Kaul2017} identifies three possible types of zero values in microbiome samples: \emph{structural} zeros,  which record truly absent OTUs; \emph{sampling} zeros, which are due to the sampling depth of the sequencing technique; and \emph{outliers}, which are due to extraneous reasons, independent of the sequencing depth. As a result, only a few major bacterial taxa of the microbiota are shared across samples and the remaining bacteria are detected only in a small percentage of the samples.
%\ech
To understand the varying composition of the microbiome in the population, we apply the  DCAM model proposed in Section \ref{Sec3:CAM_count} to the dataset from the study of \citet{OKeefe2015}, publicly available in the R package \texttt{microbiome}. The dataset contains the OTU counts of both healthy middle-aged African Americans (AA) and rural Africans (AF). The participants to the experiments were asked to follow their characteristic diet -- ``rural'' (low-fat and high-fiber) for AF and ``western'' (high fat and low-fiber) for AA -- for two weeks and then swap their diet regimes for other two weeks. During these two weeks, fecal samples were regularly collected to investigate the role of fat and fiber in the association between a specific diet and colon cancer risk. For our application, we focus on the abundance table obtained at the beginning of the experiment. Once we restrict our attention to the first time point, we find that 11 OTUs are absent across all the individuals. Therefore, they are removed from the dataset. However, since our model is designed to handle sparsity, we do not discard any underrepresented taxa, to avoid potential statistical power loss
\citep{McMurdie2014}. 
Our abundance table consists of 119 taxa measured for 38 patients. The heatmap of the data in log-scale, stratified by nationality, is shown in Figure \ref{fig:heatmap} in the Supplementary Material.\\
The varying sequencing depths also affect the so-called  \emph{library size}, i.e. the total frequencies  of the observed species (OTUs) in each subject sample. Let ${X}_j=\sum_{i=1}^{n}z_{i,j}$ indicate the library size for subject $j$ and let $\gamma_j=\bar{X}_j$ denote the corresponding average of the OTU frequencies. %\citep{Bullard2010, witten2011}. 
We incorporate the library sizes as a scaling factor in the latent level of the DCAM, i.e., 
\begin{equation}
y_{i,j}|\bm{ M, \mu,\sigma^2}  \sim  N\left(\gamma_j \cdot \mu_{M_{i,j}},\gamma_j^2\cdot \sigma^2_{M_{i,j}}\right) \iff
\frac{y_{i,j}}{\gamma_j}|\bm{ M, \mu,\sigma^2}  \sim  N\left(  \mu_{M_{i,j}}, \sigma^2_{M_{i,j}}\right).
\label{newLikLS}
\end{equation}
Both the mean and the variance of the latent continuous random variable are decomposed multiplicatively into the deterministic term $\gamma_j$ that describes the depth of the sequencing, and two stochastic terms that capture the intensity $\mu_{M_{i,j}}$ and the uncertainty $\sigma^2_{M_{i,j}}$ behind the OTU counts, respectively. \\
We adopt standard prior settings for  all the hyperparameters $\left(m,\kappa,\alpha_0,\beta_0,a_\alpha,b_\alpha,a_\beta,b_\beta\right)$. Following an empirical Bayes rationale, we set $m$ and $\kappa$ to be equal to the grand mean and the inverse of the overall sample variance. According to \citet{Rodriguez2008}, we then set $\beta_0=1$ and $\alpha_0=a_\alpha=b_\alpha=a_\beta=b_\beta=3$. A MCMC sample of 100,000 iterations was collected after a burn in period of the same length. Convergence of the MCMC was assessed based on visual inspection and standard convergence diagnostics \citep{oro22547}.

\begin{figure}[!ht]
  \begin{minipage}[b]{0.5\linewidth}
    \centering
	\includegraphics[scale=.53]{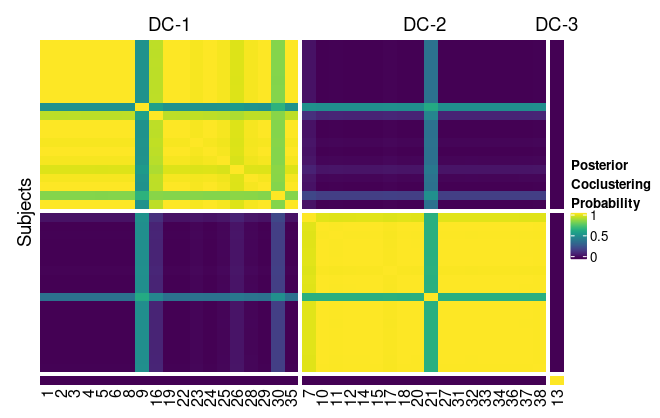}
  \end{minipage}%
  \begin{minipage}[b]{0.5\linewidth}
    \centering
\begin{tabular}[b]{lccc }
 % \hline
%  \multicolumn{4}{c}{Cluster Characteristics} \\
 % \hline
 \toprule
Cluster & DC-1 & DC-2 &  DC-4 \\ \midrule
  Cardinality & 19 & 18 & 1\\
\midrule
%  \% of Zero & 16\% & 24\% &  20\%\\
  Africans & 2& 14 &  1\\ 
  Americans & 17 & 4 &  0\\
\midrule
  Female & 11 & 6 &  0\\
  Male   &  8 & 12 &  1\\
  \bottomrule
\end{tabular}
\end{minipage}
	\centering % 1126x515
	\caption{Left: pairwise posterior probability matrix of coclustering among the 38 subjects. A partition of the subjects' distributions into three clusters is obtained after minimization of the posterior expected Variation of Information loss function. Right: Table reporting the clusters' characteristics.}
	\label{CCMicrob}
\end{figure}
%We first investigate the heterogeneity of the distributions of taxa abundances among the subjects. 
\textbf{Distributional cluster analysis.} To obtain an estimate for the distributional clustering, we first compute the posterior pairwise co-clustering matrix.  %which provides an initial idea of the underlying structure of the data.  
From this matrix, we estimate the optimal partition by considering a decision-theoretic approach and minimizing the expected posterior loss under a specific loss function. %specification. %A widely used approach,  defined directly on the space of the partitions, relies on the Binder loss function \citep{Binder1978,Lau2007}. However, the Binder loss has been shown to exhibit asymmetrical behaviors, as it leads to split clusters more likely than merging them. This behavior could result in the creation of many small clusters, often containing a single observation. 
We follow \citet{Wade2015}, who propose to rely on the minimization of the Variation of Information loss function developed by \citet{MeilaM2007}. %The VI distance compares the entropy information in two clusterings with the information shared between the two clusterings.
The results are reported in Figure \ref{CCMicrob}, where we also summarize the main characteristics of these distributional clusters (DCs) in terms of cardinality, nationality, and gender. It is remarkable how the different subpopulations of microbiome populations are captured by our model: in fact, Cluster DC-1 contains almost all the AA subjects, while Cluster DC-2 is composed mostly of AF. Cluster DC-3 contains only one subject, whose microbiome distribution is substantially unique. The resulting DCs capture relevant distributional characteristics and the diversity of the microbiomes. In particular, the Shannon index  \citep{Shannon48} or the Simpson index are often used to measure the $\alpha$-diversity of a microbiome community, i.e. the richness (number) and evenness (frequencies' similarity) of the different OTUs observed in a sample. %More in detail,  the Shannon index for an individual $j$ is defined as $H_j = - \sum _ { i = 1 } ^ { n_j } p _ { ij } \ln p _ { ij }$, where $p_{ij}$ is the ratio of the abundance of taxa $i$ over the library size of individual $j$. The Simpson index is computed similarly, as
%$S_j = \sum _ { i = 1 } ^ { n_j } p^2 _ { ij } $ \citep{Simpson}.
Conditionally on the optimal configuration, we compute 9 summary statistics for each subject. The DCs capture the different levels of $\alpha$-diversity of the microbiome subpopulations. Indeed, the Shannon index and the Simpson Index vary substantially across the groups. In detail, the distributional cluster DC-1 is characterized by microbiome distributions with shorter range, lower standard deviations, skewness, and kurtosis than DC-2. However, DC-2 also show less richness/diversity than DC-1. See Figure \ref{Boxplots} in the Supplementary Material. Therefore, we expect that the microbiomes clustered in DC-2 are more likely to contain a small fraction of highly prominent OTUs.  
To confirm this intuition, let $z_{(i),j}$ represent the $i$-th most frequently observed OTU  in subject $j$. We define the cumulative relative frequency (CRF) for subject $j$ as $CRF_j(i) = \sum_{l=1}^i z_{(l),j}/\sum_{i=1}^{n} z_{i,j}$. Figure \ref{ECDF} shows the CRFs for all the subjects colored by the DCs. The CRF curves in DC-2 tend to get very close to 1 within the first 25 most abundant OTUs, showing that the relative frequencies are dominated by few, but highly expressed taxa. At the same time, the CRF curves in DC-1 increase with a slower pace, meaning more heterogeneity in the microbiome subpopulations. The CRF curve of the single subject in DC-3 increases much more slowly, indicating a peculiar microbiome, richer and more diverse than any other. 
\begin{figure}[t!]
	\centering % 1126x515
	\includegraphics[scale=.6]{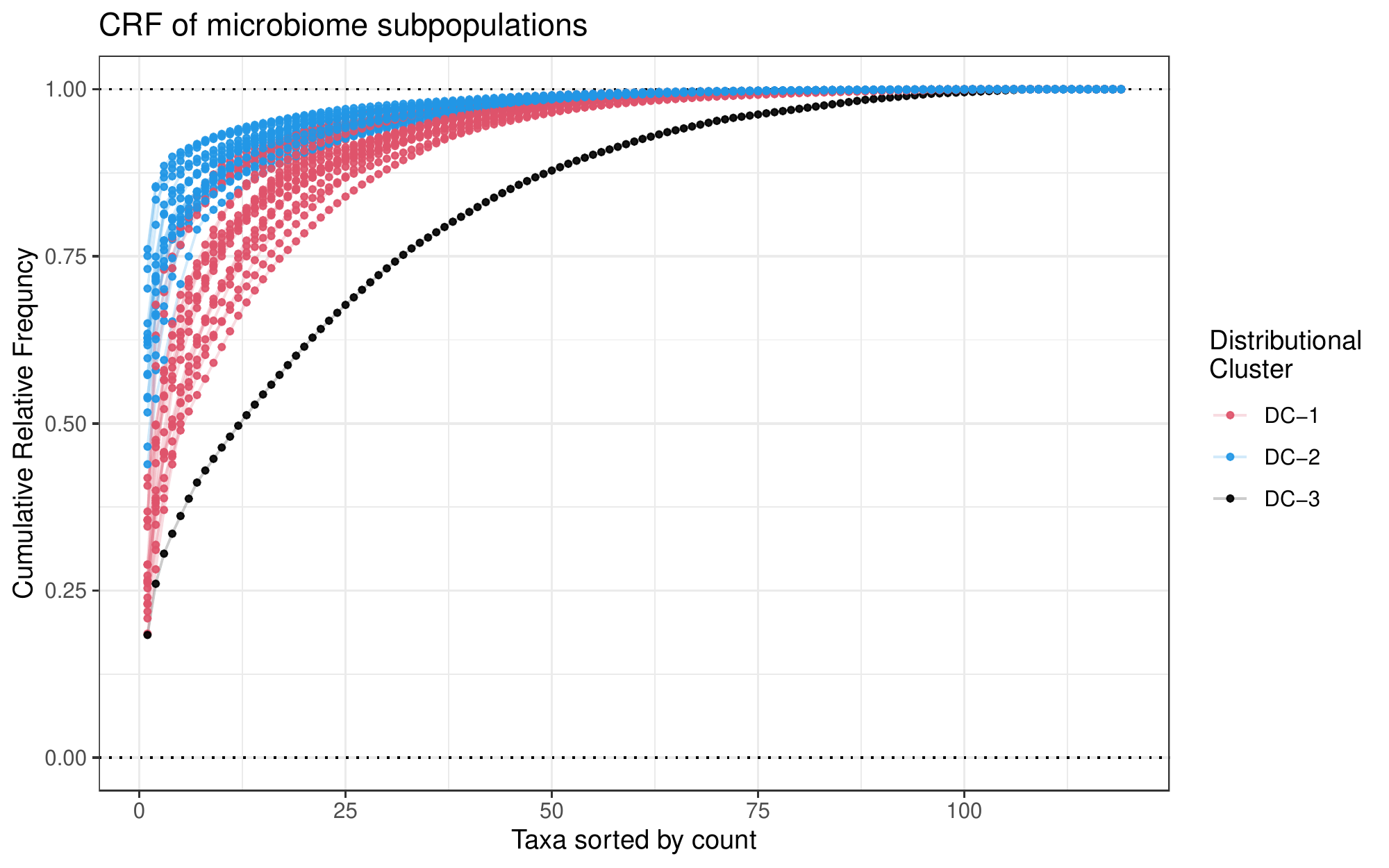}
	\caption{Cumulative Relative Frequency of the OTU abundances, sorted by decreasing order. Each color represents a DC. The lower the line, the richer and more diverse is the microbiome.}
	\label{ECDF}
\end{figure}	
We compute the median abundance of each OTU stratified by DC. In both cluster DC-2 and cluster DC-3, the leading OTU is the \emph{Prevotella melaninogenica}. On average, it represents 60\% of the observed counts in each individual in DC-2 and the 18\% in DC-3. Cluster DC-1 is more diverse: the two most expressed OTUs are the \emph{Bacteriodes vulgatus} and the \emph{Oscillospira guillermondii} that on average represent the 15\% and the 12\% of the subjects' library size, respectively. Cluster DC-3 is also characterized by a high proportion of \emph{Faecalibacterium prausnitzii} (7\%). \\

\textbf{Observational cluster analysis.} We further investigate the observational clusters (OC) induced by DCAM.
Minimizing the Variation of Information we find 8 OCs, representing different intensities of the latent process underlying the counts. %In this framework, the clusters can be interpreted as different intensities of the abundance magnitude\bch intensities of the abundance magnitude? che vuol dire? questa frase non si capisce qui \ech. 
For a visual comparison, we report in Figure \ref{Boxplots_OC} the boxplots of the taxa counts grouped by OC, with the value of the median superimposed. For simplicity, we group the 8 OCs in three macro clusters representing the \emph{abundance classes} ({Low}, {Medium}, and  {High}). Heatmaps showing the prevalence of each OTU in every abundance class are reported in the Supplementary Material.
\begin{figure}[ht!]
	\centering % 1126x515
	\includegraphics[scale=.5]{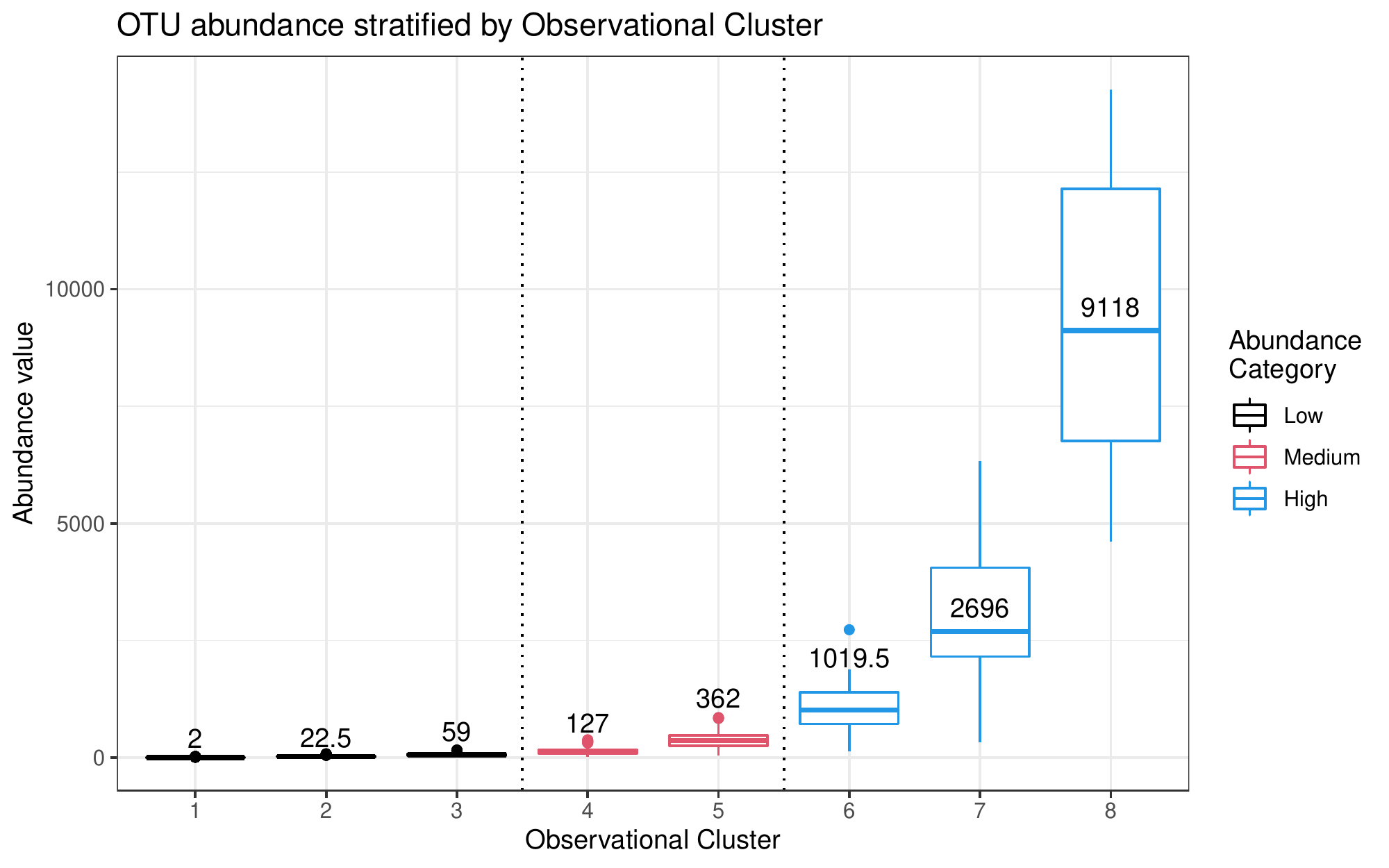}
	\caption{Boxplots of microbiome abundance counts stratified by observational clusters. We can recover three macro-clusters, with Low, Medium and High level of expression. The count median of each category is superimposed.}
	\label{Boxplots_OC}
\end{figure}	

Finally, the distributional and observational results can be combined to discover more informative patterns, relating OTUs and subjects. Here, we investigate the co-expression structure among the most expressed OTUs in DC-1 and DC-2. To do so, we first stratify the subjects by distributional clusters (DC-1 and DC-2) and remove the OTUs that, across all individuals, are always assigned to the Low abundance class. With the remaining 12 OTU, we compute two pairwise co-occurrence matrices ($PCM_k$) as
$PCM_k(l,g) = \sum_ {h=1}^{n_k} \mathds{1}_{\{AC(g)=AC(l)\}} /n_k,$
i.e. the percentage of times that OTU $l$ and OTU $g$ have been assigned to the same abundance class (AC) % $AC(\cdot)\in \{\texttt{Low}, \texttt{Medium}, \texttt{High}\}$\bch e' necessaria la nuova notazione? \ech
across the $n_k$ individuals assigned to DC $k=1,2$.
We plot two co-occurrence networks among the selected OTUs in Figure \ref{fig:my_network}. Taxa $l$ and $g$ are linked if $PCM_k(l,g)=PCM_k(g,l)>0.5$. The nodes are colored according to the modal abundance class. Again, the \emph{Prevotella malaninogenica} and the \emph{Prevotella oralis} are both highly expressed and co-occurent in DC-2, while in DC-1 they fall in the {Low} abundance class and are not linked. In DC-1, highly and co-occurent taxa are the \emph{Bacteriodes vulgatus}. These results are in line with  well-established results in the literature, since subjects with a preponderance of \emph{Prevotella spp.} are more likely to consume fibers, while diets richer protein and fat diet - typical of western diets - lead to a predominance of \emph{Bacteroides spp.} \citep{Graf2015,Preda2019}.
  %Finally, weand the \emph{Oscillospira guillermondii}. This last OTU is highly expressed in both cases.
  %Moreover, both the networks are characterized by a group of 5 taxa all linked to each other.%\bch sarebbe stato interessante capire come queste Prevotella malaninogenica and oralis/ Bacteriodes and Oscillospira figurano nella letterature nutrizionista\ech
\begin{figure}[ht!]
    \centering
    \includegraphics[scale=.4]{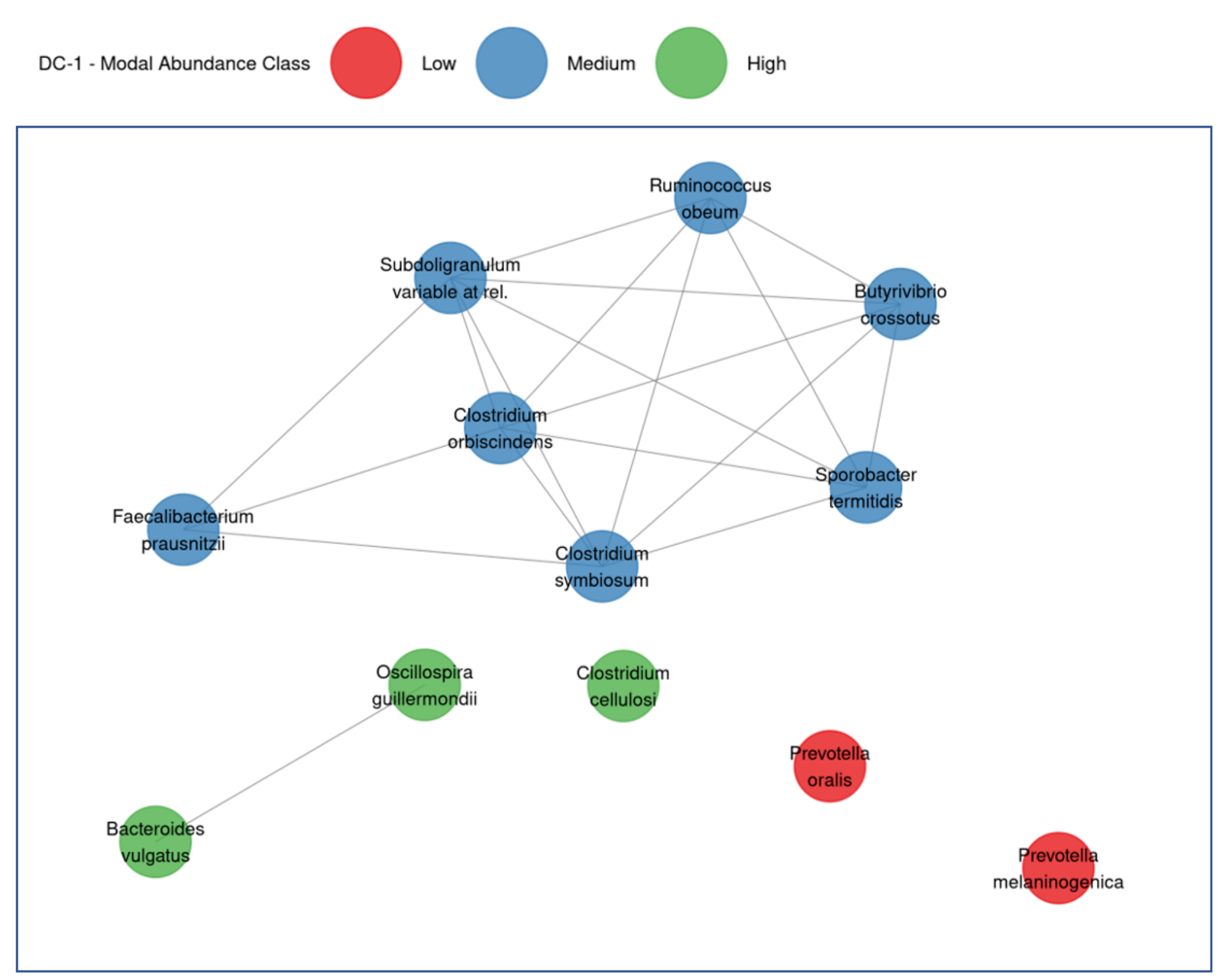}
    \includegraphics[scale=.4]{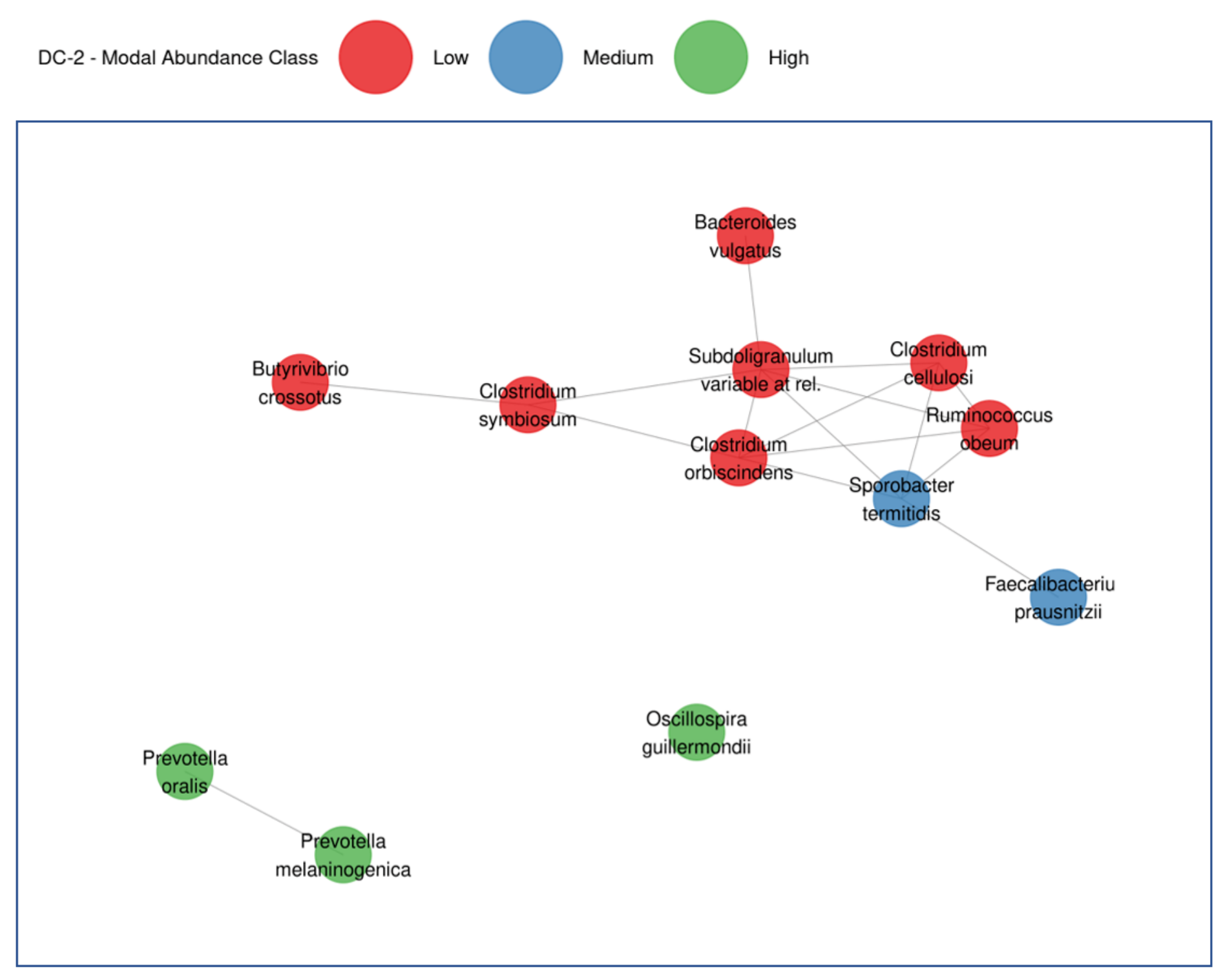}
    \caption{Co-expression networks among OTUs reporting a subset of most expressed microbes for DC-1 (left panel) and DC-2 (right panel).}
    \label{fig:my_network}
\end{figure}

\section{Simulation study}
\label{SEC::6SimStudy}
We test the  performances of the proposed methodology for continuous (CAM) and discrete measurements (DCAM) within a simulation study comprised of three scenarios. For every scenario, we generate the units containing the observations from highly overlapping mixture densities. We want to assess our model's ability to recover the ground truth by recognizing the units sampled from the same mixture density (i.e. identify the distributional clusters -- DC) and the observations generated from the same mixture component (i.e. identify the observational clusters -- OC), for increasing number of observations in each unit, $n_j$, or  for increasing number of units, $J$. We adopt the same prior specification as in the case study and estimate the best partitions by minimizing the Variation of Information given the MCMC output. We now describe the three scenarios:\\
\emph{\textbf{Scenario 1 - CAM}}. We define 6 different distributions of the simulated data $\bm{Y}_{h}$, as
$$  \bm{Y}_{h} \sim \sum_{g=1}^{h} \frac{1}{g}N(m_g,0.6), \text{ where } m_g\in\{0,5,10,13,16,20\} \text{ and }  h=1,\ldots,6 .$$
    \begin{comment}
    \begin{align*}
    \bm{Y}_{1} &\sim N(0,.6), \quad \quad \bm{Y}_{2} \sim \frac12 N(0,.6)+\frac12N(5,.6),\\
    \bm{Y}_{3} &\sim \frac13 N(0,.6)+\frac13 N(5,.6)+\frac13 N(10,.6),\\
    \bm{Y}_{4} &\sim \frac14 N(0,.6)+\frac14 N(5,.6)+\frac14 N(10,.6)+\frac14 N(13,.6),\\
    \bm{Y}_{5} &\sim \frac15 N(0,.6)+\frac15 N(5,.6)+\frac15 N(10,.6)+\frac15 N(13,.6)+\frac15 N(16,.6),\\
    \bm{Y}_{6} &\sim \frac16 N(0,.6)+\frac16 N(5,.6) + \frac16 N(10,.6)+\frac16 N(13,.6) + \frac16 N(16,.6)+\frac16 N(20,.6),
    \end{align*}
    \end{comment}
    From each of these distributions, we sample two units,  therefore $J=12$. The true number of DCs and OCs is 6 in both cases. To assess how the model behaves with asymmetries in the units' sample sizes, we follow two different approaches. \textbf{Case A}: all the units have the same cardinality $n_j=n_A$, where $n_A\in \{25,50,75\}$. \textbf{Case B}: each unit has cardinality $n_j$ proportional to the number of mixture components it contains. Specifically, $n_{j}=%\#\bm{Y}_{j}=
    n_B \cdot j$ for $j=1,\ldots,6$ and $n_B \in \{5,10,20\}$. \\
    %To assess the DCs (DC), we verify if the couples with same distributional law are clustered together. We then study how the observations are partitioned (OC) among the different components of the mixtures.\\
\emph{\textbf{Scenario 2 - CAM}}.  %We test the performance of our model in a challenging framework, where 
Four highly overlapping mixtures are considered:
    \begin{align*}
    \bm{Y}_{1} &\sim 0.75N(0,0.6)+  0.25 N(3,0.6),\quad 
    \bm{Y}_{2} \sim 0.25N(0,0.6)+  0.75 N(3,0.6),\\
    \bm{Y}_{3} &\sim 0.33N(0,0.6)+  0.34 N(-2,0.6)+  0.33 N(2,0.6),\\
    \bm{Y}_{4} &\sim 0.25N(0,0.6)+  0.25 N(-2,0.6)+  0.25 N(2,0.6)+  0.25 N(10,1).
    \end{align*}
The true number of DCs is 4 and there are 5 OCs, corresponding to the 5 different normal distributions that constitute the mixtures. We keep the number of observation per unit constant, equal to $n_j=40$ for any $j$. Instead, we vary the number of units sampled from each distribution, indicated as $r=1,\ldots,6$. Therefore, $J_r=4\cdot r$, i.e. the total number of considered units ranges from $J_1=4$ to $J_6=24$. In this way, we can investigate the estimated DC structures as the total number of units increases.\\
 \emph{\textbf{Scenario 3 - DCAM}}. First, let $\delta_x$ denote a point mass placed on point $x$ and let $\mathcal{U}_d\left(q,Q\right)$ represent a Uniformly Discrete distribution over the set of integers $\{q,\ldots,Q\}\subset \mathbb{Z}$. We consider three different discrete mixtures, from which we sample $J=10$ units: 
    \begin{align*}
    \bm{Y}_{g} &\sim \sum_{b=1}^2\omega_b \delta_{b-1} + \omega_3 \:\mathcal{U}_d\left(0,Q_g\right) \text{ with } g=1,2,3 \text{ and } Q_g\in\{10,50,100\},
    \end{align*}
    with $\omega_g=n_g/\sum_{l=1}^3 n_l\: ,\: g=1,\ldots,3$ denoting the mixture weights.  We set $\omega_1=\omega_2$ by generating the $n_1=50$ observations equal to zero and $n_2=50$ equal to one to simulate a case of low value inflation. We investigate the performance of the model in 6 cases, distinguished by the number of observations assigned to the third mixture component, i.e. $n_3\in\{10, 15 ,25 , 50,75,100\}$. 
    We design this simulation study to test how DCAM perform on distributions that are similar to typical microbiome samples, raising the same type of challenges. The number of true DCs is fixed equal to 3. However, there is no clear number of true OCs in this case. To assess the grouping at the level of the observations, we assume the following sets as ground truth, mimicking the segmentation in abundance levels of Section \ref{Sec:MicrobApp}. We postulate 4 OCs, where the first set contains ``low-expressed'' observations (i.e. constituted of zeros and ones). The remaining 3 groups are obtained partitioning the support into abundance classes corresponding to the intervals $\left[2,10\right]$,$\left(11,50\right]$ and $\left(51,100\right]$.\\
We report an illustrtion of the mixtures distributions of each scenario in the Supplementary Material.\\
For each scenario, we also run a nDP mixture model %with $30$ possible OCs and $25$ DCs
for the case with the highest number of observations.
In Table \ref{TABRAND} we assess the goodness of the estimated optimal partition by comparing the number of detected clusters, computing the Adjusted Rand Index \citep[ARI -][]{Hubert1985} between the estimated optimal partition and the ground truth. Moreover, we report the normalized Frobenius distance \citep{Horn2013} between the estimated posterior pairwise coclustering matrices and the true coclustering structures, defined as follows. Given two $p\times p$ matrices $A=\{a_{ij}\}_{i,j=1}^p$ and $B=\{b_{ij}\}_{i,j=1}^p$, we define $NFD(A,B)=\sum_{i,j=1}^p (a_{ij}-b_{ij})^2/p^2$. 
From Table \ref{TABRAND}, we can appreciate how the model can recover the ground truth, even for small sample sizes. In particular, the NFD between the distributional clustering structures approaches zero as the sample size increase. The same holds for the ARI index, that shows how the truth is recovered by the estimated best partition.
We see how CAM misassigned a few observations in the wrong OCs in Scenario 2. This is due to the fact that the different mixture components are highly overlapping. Nevertheless, CAM and DCAM perform really well in Scenarios 1 and 3, respectively, where the true OC are well separated. 
Lastly, it is evident how the overlap of the data impacts the estimated partitions of the nDP, both at the distributional and at the observational level. In particular, when highly overlapping discrete data are considered (Scenario 3), it collapses all the units in a single DC.

\begin{table}
\begin{center}
\begin{tabular}{c c c c c c c c}
\toprule
 \textbf{Scenario 1} & $n_A=25$ & $n_A=50$ & $n_A=75$ & $n_B=5$ & $n_B=10$ & $n_B=20$ & nDP\\
\midrule
  DC-D/T & 4/6 & 6/6 & 6/6 & 4/6 & 5/6 & 6/6 & 5/6\\
  DC-ARI & 0.421 & 1.000 & 1.000  & 0.542 & 0.718 & 1.000  & 0.718\\
  DC-NFD & 0.123 & 0.007 & 0.004  & 0.094 & 0.058 & 0.002  & 0.056\\  
%  \textbf{Cluster} &\textbf{Scenario 1 - OC} & $n_A=25$ & $n_A=50$ & $n_A=75$ & $n_B=5$ & $n_B=10$ & $n_B=20$ & -\\
% \midrule
  OC-D/T & 4/6 & 6/6 & 6/6 & 5/6 & 6/6 & 6/6 & 6/6\\
  OC-ARI &  0.925 & 0.988 & 0.973 & 0.964 & 0.970 & 0.964 & 0.353        \\
  OC-NFD &  0.082 & 0.102 & 0.115 & 0.041 & 0.064 & 0.098   & 0.134    \\  
\midrule
  \textbf{Scenario 2} & $J_1=4$ & $J_3=8$ & $J_3=12$ & $J_4=16$ & $J_5=20$ & $J_6=24$ & nDP\\
\midrule
 DC-D/T & 3/4 & 4/4 & 5/4 & 5/4 & 4/4 & 4/4 & 4/4\\
 DC-ARI & 0.000 & 1.000 & 0.891 & 0.918 & 1.000 & 1.000 & 1.000\\
 DC-NFD & 0.081 & 0.003 & 0.022 & 0.019 & 0.003 & 0.008& 0.001 \\  
 OC-D/T & 5/5 & 5/5 & 5/5 & 4/5 & 5/5 & 5/5 & 2/5\\
 OC-ARI & 0.665 & 0.756 & 0.714 & 0.629 & 0.758 & 0.768& 0.092\\
 OC-NFD & 0.113 & 0.124 & 0.143 & 0.152 & 0.129 & 0.143 & 0.149 \\
 \midrule

  \textbf{Scenario 3} & $n_3=10$ & $n_3=15$ & $n_3=20$ & $n_3=50$ & $n_3=75$ & $n_3=100$ &nDP \\
 \midrule
 DC-D/T & 7/3 & 2/3 & 3/3 & 5/3 & 4/3&3/3 & 1/3  \\
 DC-ARI &  0.115 & 0.366 & 1.000 & 0.695 & 0.826 & 1.000& 0.000 \\
 DC-NFD & 0.259 &0.251 &0.035 &0.076 &0.057 & 0.000& 0.640\\  
OC-D/T & 4/4 & 4/4 & 5/4 & 3/4 & 4/4 & 6/4 & 10/4  \\
OC-ARI & 0.999 & 0.945 & 0.966 & 0.973 & 0.953 & 0.937&0.534 \\
OC-NFD & 0.722 & 0.238 & 0.414 & 0.454 & 0.338 & 0.151 & 0.740\\  
 
\bottomrule
\end{tabular}
\caption{Distributional (DC-) and observational (OC-) clustering performance for CAM, DCAM and nDP evaluated according the number of detected clusters over the truth (D/T), the Adjusted Rand Index (ARI) and the normalized Frobenius distance (NFD) between posterior pairwise coclustering matrices.}
\label{TABRAND}
\end{center}
\end{table}

\section{Discussion}
\label{SEC::7Conclusion}

We have introduced a nested nonparametric model that allows investigating distributional heterogeneity in nested data. The proposed Common Atoms Model allows a two-layered clustering at the distributional and observational level, similarly to the nDP of \citet{Rodriguez2008}. By construction, our model formulation allows the sharing of atoms with different weights across distributions, and it does not suffer from the degeneracy properties  that occurs in the nDP, as noted by \citet{Camerlenghi2018} whenever there is a tie between atoms. The Common Atoms Model specification is appealing and convenient for a variety of reasons: it is simple, allows a more refined description of distributional clusters, and it is computationally efficient thanks to the implementation of a nested version of the independent slice-efficient sampler.  We have extended the methodology to take into account the modeling and clustering of discrete distributions, by considering a rounded mixture of Gaussian kernels as in \citet{Canale2011}. %A further contribution of this work is the implementation of a nested version of the independent slice-efficient sampler. 
%AM sposterei questa frase sopra, subito dopo: and it is computationally efficient thanks to the implementation of a nested version of the independent slice-efficient sampler
We applied our methodology to a real microbiome dataset, aiming to cluster individuals characterized by similar taxa distributions. Controlling for each subject's library size, we grouped the data minimizing the Variation of Information loss function, and showed how the model detects clusters catching main differences among the distributions. In our application, the distributional clustering we recover distinguishes among dietary patterns, discriminating African high fiber from Western high fats diets. The observational clustering provides insights about the abundance levels among taxa and helps the identification of co-expression networks.     We also assess the performance of our modeling approach through a simulation study where the data are 
simulated from highly overlapping distributions.

The application of the proposed model to the real data set is limited by the type and number of clinical and demographic covariates that are available. 
%AM aggiugerei due parole sulle conlusioni ottenute sui dati reali - così sembra che l'applicazione sia puramente accessoria
If additional covariates were available, they could be used to define more complex dependencies, e.g. by constructing dependent random measures with covariate-dependent weights as in \citet{MacEachern2000} \citep[see, also][]{barrientos2012} or to build risk-prediction models. Another interesting extension considers the incorporation of a time dimension and the study of 
how distributional clusters vary across time.
We leave these directions to future investigation. The code employed for this paper is openly available at \url{https://github.com/Fradenti/CommonAtomModel}

\clearpage

\section*{Supplementary Material}
\appendix

\section{Proofs} \label{Proof}
\subsection{Proof of Equation \eqref{coclustG}}  \label{app:eq_cluster}
Let $G_j$ and $G_{j'}$, with $j \not =j'$, be two random probability measures as defined in \eqref{CAM1}-\eqref{CAM2}. Then,
\begin{align*}
\mathbb{P}\left(G_{j}=G_{j'}|Q\right) &= \sum_{k \geq 1}\mathbb{P}\left(G_j=G_{j'}=G^{*}_{k}|Q\right)=
\sum_{k \geq 1}\mathbb{P}\left(G_j=G^{*}_{k} , G_{j'}=G^{*}_{k}|Q\right)\\
&= 	\sum_{k \geq 1}\mathbb{P}\left(G_j=G^{*}_{k}|Q\right)\mathbb{P}\left( G_{j'}=G^{*}_{k}|Q\right)=\sum_{k \geq 1} \pi_k^2>0.
\end{align*}
As a consequence we get
\begin{equation*}
\mathbb{P}\left(G_j=G_{j'}\right) = \mathbb{E}\left[\mathbb{P}\left(G_j=G_{j'}|Q\right) \right] = \mathbb{E}\left[\sum_{k \geq 1} \pi_k^2 \right]= \sum_{k \geq 1} \mathbb{E}\left[\pi_k^2 \right],
\end{equation*}
exploiting the stick--breaking representation of the $\pi_k$'s we have
\begin{align*}
  \mathbb{P}\left(G_j=G_{j'}\right) & =    \sum_{k \geq 1} \E{ V_k^2 \prod_{i=1}^{k-1} (1-V_i)^2} = \sum_{k \geq 1} \frac{B(3, \alpha )}{B(1, \alpha)}  \left[ \frac{B(1,\alpha+2)}{B(1, \alpha)}   \right]^{k-1}
\end{align*}
where we denoted by $B(x,y) = \Gamma (x) \Gamma(y) / \Gamma (x+y)$ the beta function. Some simple calculations show that
\begin{align*}
    \mathbb{P}\left(G_j=G_{j'}\right) & = 
    \frac{2}{(1+\alpha)(2+\alpha)} \sum_{k \geq 0} \left[\frac{\alpha}{\alpha+2}\right]^{k} =
    \frac{1}{\alpha+1},
\end{align*}
and then \eqref{coclustG} follows.

\subsection{Proof of Equation \eqref{xijxij}}
Let $y_{i,j}| G_j\sim G_j$ and $y_{i',j'}| G_{j'}\sim G_{j'}$ be two observations coming from two probability measures both sampled from $Q$ for $j \not = j'$. Then, 
\begin{align}
\Prob{y_{i,j} =y_{i',j'} }&=\E{\Prob{y_{i,j}=y_{i',j'}|G_j,G_{j'}}}  \nonumber \\ &= \E{  \frac{1}{1+\alpha}\Prob{y_{i,j}=y_{i',j'}|G_j=G_{j'}}+
\frac{\alpha}{1+\alpha}\Prob{y_{i,j}=y_{i',j'}|G_j\neq G_{j'}}} \nonumber\\&=  \frac{1}{1+\alpha}\E{\sum_{r\geq 1} \omega^2_{r,j}}+
\frac{\alpha}{1+\alpha}\E{\sum_{r\geq 1} \omega_{r,j}\omega_{r,j'} }.
\label{eq:xx_E}
\end{align}
To conclude the proof we evaluate the two expected values in the last equation. As for the first one, it is easy to observe that 
\[
\E{\sum_{r\geq 1} \omega^2_{r,j}} = \frac{1}{1+\beta}
\]
using the stick-breaking representation of the weights as in Section \ref{app:eq_cluster}. As for the second expected value we exploit the independence across the $\omega_{r,j}$'s, for different values of $j$, and we get
\begin{align*}
 \E{\sum_{r\geq 1} \omega_{r,j}\omega_{r,j'} } &=  \sum_{r\geq 1} \E{\omega_{r,j}}\E{\omega_{r,j'} } = \sum_{r \geq 1} \E{\omega_{r,j}}^2\\
 & = \sum_{r \geq 1} \left[ \E{V_r} \prod_{i=1}^{r-1} \E{1-V_i} \right]^2
 = \sum_{r \geq 1} \left[ \frac{1}{1+\beta} \left( \frac{\beta}{1+\beta} \right)^{r-1} \right]^2\\
 & = \frac{1}{2\beta+1}
\end{align*}
where the last equality follows by straightforward calculations.
Substituting the previous expressions in \eqref{eq:xx_E} we finally obtain
\begin{equation*}
   \Prob{y_{i,j} =y_{i',j'} }= \frac{1}{1+\alpha}\frac{1}{1+\beta} + \frac{\alpha}{1+\alpha}\frac{1}{2\beta+1}
\end{equation*}
and Equation \eqref{xijxij} follows.

\subsection{Proof of Equations \eqref{eq:cov}--\eqref{eq:corr_same}}
Suppose that the  $G_j$'s are defined on a Polish space $\left(\mathbb{X},\mathcal{X}\right)$ and consider $A, B \in \mathcal{X}$.
Recall that $G_{j},G_{j'}|Q \stackrel{i.i.d.}{\sim} Q$, where
$Q= \sum_{k \geq 1} \pi_k \delta_{G_k^*}$. In the following, for the sake of notational simplicity and without loss of generality, we suppose that $j=1$ and $j'=2$.  %\bch non era $G_k^*$ prima? Non ha comunque molto senso, visto che chiamiamo il DP sia $G_k^*$ che $G_j^*$ qui sotto. NE AVEVO PARLATO CON CAMERL, INDICHIAMO CON $G^*$ una realizzazione del DP sulle distributions \ech
We now focus on the proof of \eqref{eq:cov}, for this reason we first evaluate
\begin{align*}
	&\E{G_1(A)  G_2(B)}=	\E{\E{G_1(A) \cdot G_2(B)|Q}}\\
& \qquad	= \E{ \sum_{k \geq 1 } \pi_k^2 G_k^*(A) G_k^* (B)
+\sum_{k_1 \not = k_2} \pi_{k_1} \pi_{k_2} G_{k_1}^*(A) G_{k_2}^*(B)}
\end{align*}
Since the $G_k^*$'s are independent and identically distributed and thanks to the fact that 
\[
\Prob{G_1=G_2} = \E{\sum_{k \geq 1 } \pi_k^2},
\]
we can equivalently write
\begin{equation*} 
    	\E{G_1(A) G_2(B)}= 
    	\Prob{G_1=G_2} \E{G_1^*(A) G_1^*(B)} + 
    	\Prob{G_1 \not = G_2} \E{G_1^*(A) G_2^*(B)}.
\end{equation*}
%\bch non manca un conditioning, o cmq qualsosa che non va nella equazione del valore atteso \ech 
%\bch BASE measure del G è $G^*$, mi torna abbastanza ciò che ha scritto CAM\ech
In view of Equation \eqref{coclustG}, the previous expression boils down to the following one
\begin{equation} \label{eq:EG1G2}
\E{G_1(A)  G_2(B)}=
 \frac{1}{\alpha+1} \E{G^*_1(A) G^*_1(B)}+\frac{\alpha}{\alpha+1} \E{G^*_{1}(A) G^*_{2}(B)}.
\end{equation}
We now focus on the evaluation of the two expected values in \eqref{eq:EG1G2}.
The first one can be expressed as
\begin{align*}
	\E{G^*_{1}(A)G^*_{1}(B)} =& \E{ \sum_{l\geq 1}\omega_{l,1} \delta_{\theta_l}(A) \cdot \sum_{l\geq 1}\omega_{l,1} \delta_{\theta_l}(B)}\\
	=&\E{ \sum_{l\geq 1}\omega^2_{l,1} \delta_{\theta_l}(A\cap B)} +\E{  \sum_{l\geq 1}\sum_{r\neq l} \omega_{l,1}\omega_{r,1} \delta_{\theta_l}(A)\delta_{\theta_r}(B)}\\
	=&\E{ \sum_{l\geq 1}\omega^2_{l,1}}H(A\cap B) +\left(1-  \sum_{l\geq 1}  \E{ \omega^2_{l,1} }\right)H(A)H(B)\\
	=& \frac{1}{\beta+1}  H(A\cap B) +\frac{\beta}{\beta+1}H(A)H(B),
\end{align*}
where we used the fact that $H$ is the distribution of the atoms and 
\[
\E{ \sum_{l\geq 1}\omega^2_{l,1}} = \frac{1}{1+\beta}.
\]
The second expectation in \eqref{eq:EG1G2} can be evaluated as follows:
\begin{align*}
\E{G^*_{1}(A) G^*_{2}(B)} =& \E{ \sum_{r\geq 1}\omega_{r,{1}} \delta_{\theta_r}(A) \cdot \sum_{l\geq 1}\omega_{l,{2}} \delta_{\theta_l}(B)}\\
=&\E{ \sum_{r\geq 1}\omega_{r,{1}}\omega_{r,{2}} \delta_{\theta_r}(A\cap B)} +\E{  \sum_{r\neq l} \omega_{r,{1}}\omega_{l,{2}} \delta_{\theta_r}(A)\delta_{\theta_l}(B)}\\
=&\E{ \sum_{r\geq 1}\omega_{r,{1}}\omega_{r,{2}}}
H(A\cap B) + \E{  \sum_{r\neq l} \omega_{r,{1}}\omega_{l,{2}} }H(A)H(B).
\end{align*}
We note that the previous equality holds true in particular when $A=B=\mathbb{X}$. In that case
\[
1= \E{G^*_{1}(\mathbb{X})\cdot G^*_{2}(\mathbb{X})} = \sum_{r\geq 1} \E{\omega_{r,{1}}}\E{\omega_{r,{2}}}
H(\mathbb{X}) +   \sum_{r\neq l} \E{\omega_{r,{1}}}\E{\omega_{l,{2}} }H(\mathbb{X})H(\mathbb{X})
\]
which is tantamount to saying that
\begin{equation} \label{eq:equality_cov}
 1 - \sum_{r\geq 1} \E{\omega_{r,{1}} \omega_{r,2}}
 =   \sum_{r\neq l} \E{\omega_{r,{1}}}\E{\omega_{l,{2}} }.
\end{equation}
Coming back to the evaluation of $\E{G^*_{1}(A)\cdot G^*_{2}(B)}$, we have:
\begin{equation} \label{eq:II_ex_val}
\begin{split}
\E{G^*_{1}(A)\cdot G^*_{2}(B)} &= \sum_{r\geq 1} \E{\omega_{r,{1}}}\E{\omega_{r,{2}}}
H(A\cap B) +   \sum_{r\neq l} \E{\omega_{r,{1}}}\E{\omega_{l,{2}} }H(A)H(B)\\
& = \sum_{r\geq 1} \{\E{\omega_{r,{1}}}\}^2
H(A\cap B) +   \left( 1- \sum_{r\geq 1} \{\E{\omega_{r,{1}}}\}^2 \right)H(A)H(B),
\end{split}
\end{equation}
where we used \eqref{eq:equality_cov} and the fact that $\omega_{r,1}$ and $\omega_{r,2}$ are independent and identically distributed.
It remains to evaluate the infinite series over $r \geq 1$ in \eqref{eq:II_ex_val}, and this issue may be easily addressed, indeed: 
\begin{align*}
\sum_{r\geq 1}\{\E{\omega_{r,{1}}}\}^2 &= \sum_{r\geq 1}\left\{ \E {V_r\prod_{q=1}^{r-1}(1-V_q)} \right\}^2\\% \sum_{r\geq 1} %\E{v_r}^2\prod_{q=1}^{r-1}\E{(1-v_q) }^2\\ =& 
&=\sum_{r\geq 1}\left[ \frac{1}{ (1+\beta)^2 }  \left(\frac{\beta}{1+\beta}\right)^{2({r-1})}\right]=
%\frac{1}{ (1+\beta)^2 } \sum_{r\geq 0}   \left(\left(\frac{\beta}{1+\beta}\right)^2\right)^{k}\\=&
%=& \frac{1}{ (1+\beta)^2 } \frac{(1+\beta)^2}{(1+\beta)^2-\beta^2}=
\frac{1}{2\beta+1}.
\end{align*}
Substituting the previous expression in \eqref{eq:II_ex_val}, we get:
\begin{align*}
\E{G^*_{1}(A) G^*_{2}(B)} = \frac{1}{1+2\beta} 
H(A\cap B) + \frac{2\beta}{1+2\beta} H(A)H(B).
\end{align*}
Putting the expressions of $\E{G_1^*(A) G_2^*(B)}$ and $\E{G_1^*(A)G_1^*(B)}$ in \eqref{eq:EG1G2}, we obtain
\begin{equation} \label{eq:EG1G2_final}
\begin{split}
& \E{G_1(A)  G_2(B)}\\
& \qquad= H(A \cap B)\left( \frac{q_1}{1+\beta}+\frac{1-q_1}{1+2\beta} \right)+
H(A) H (B) \left( q_1 \frac{\beta}{1+\beta}+(1-q_1)\frac{2\beta}{1+2\beta} \right),
\end{split}
\end{equation}
where we recall that $q_1=\frac{1}{\alpha+1}$.
We can use \eqref{eq:EG1G2_final} to evaluate the covariance between $G_j(A)$ and $G_{j'}(A)$ for $j \not = j'$:
\begin{align*}
& Cov\left(G_j(A),G_{j'}(B)\right) \\
& \qquad= H(A \cap B)\left( \frac{q_1}{1+\beta}+\frac{1-q_1}{1+2\beta} \right)+
H(A) H (B) \left( q_1 \frac{\beta}{1+\beta}+(1-q_1)\frac{2\beta}{1+2\beta} -1\right)\\
& \qquad= H(A \cap B)\left( \frac{q_1}{1+\beta}+\frac{1-q_1}{1+2\beta} \right)+
H(A) H (B) \left( -\frac{q_1}{1+\beta}-\frac{1-q_1}{1+2\beta} \right),
\end{align*}
hence \eqref{eq:cov} is now proved.

As for the determination of the correlation \eqref{eq:corr_same}, we first specialize \eqref{eq:cov} when $A=B \in \mathcal{X}$, to get:
\begin{equation}  \label{eq:cov_uguali}
Cov\left(G_j(A),G_{j'}(A)\right) =  \left( \frac{q_1}{1+\beta}+\frac{1-q_1}{1+2\beta} \right) H(A)(1-H(A)),
\end{equation}
and then we divide  $Cov\left(G_j(A),G_{j'}(A)\right)$ by the squared roots of the variances $Var(G_j(A)$ and $Var(G_{j'}(A)$. 
More precisely we have:
\begin{equation} \label{eq:corr1}
\begin{split}
   Corr\left(G_j(A),G_{j'}(A)\right) &=  \frac{Cov\left(G_j(A),G_{j'}(A)\right)}{\sqrt{Var(G_j(A))\cdot Var(G_{j'}(A))}} \\
   & \stackrel{\eqref{eq:cov_uguali}}{=}
  \frac{H(A)(1-H(A))}{\sqrt{Var(G_j(A))\cdot Var(G_{j'}(A))}} \left( \frac{q_1}{1+\beta}+\frac{1-q_1}{1+2\beta} \right) .
  \end{split}
\end{equation}
where the variances in the denominator may be easily evaluated as follows
\begin{align*}
Var(G_j(A))=&\E{G_j(A)^2}-\E{G_j(A)}^2
\\=&
\E{\E{G_j(A)^2|Q}}-\E{G_j(A)}^2
=\E{G^*_1(A)^2}-\E{G_1^*(A)}^2\\
=& \frac{1}{\beta+1} H(A) + \frac{\beta}{1+\beta}H(A)^2-H(A)^2\\
=& \frac{1}{\beta+1} H(A)\left(1-H(A)\right),
\end{align*}
for any $j =1 ,\ldots, J$.
Putting the previous expression in \eqref{eq:corr1} we get:
\begin{align*}
\rho_{j,j'}=Corr(G_j(A),G_{j'}(A))&=\left( 
\frac{q_1}{\beta+1}+\frac{1-q_1}{2\beta+1} \right)\bigg/\frac{1}{\beta+1}\\
&=q_1+\frac{\beta+1}{2\beta+1}(1-q_1)= 1 - \frac{\beta}{2\beta+1}(1-q_1)\\
&= 1 - \frac{\beta}{2\beta+1}\cdot\frac{\alpha}{1+\alpha},
\end{align*}
and \eqref{eq:cov_uguali} is now proved. From the last expression, we finally observe that 
$\rho_{j,j'}$ is always in between $1/2$ and $1$.

\subsection{Proof of Proposition \ref{prp:mixed_moments}}
Recalling the CAM model \eqref{CAM1}-\eqref{CAM2}, we get
\begin{align*}
&\mathbb{E} \left[ \int_{\mathsf{P}_\mathbb{X}^2} f_1(g_1 ) f_2 (g_2) Q(d g_1) Q (d g_2) \right]\\
& \qquad\qquad  = \mathbb{E} \left[ \int_{\mathsf{P}_\mathbb{X}^2} f_1(g_1 ) f_2 (g_2) \sum_{k_1\geq 1} \pi_{k_1} 
\delta_{G_{k_1}^*} (d g_1) \sum_{k_2\geq 1} \pi_{k_2} \delta_{G_{k_2}^*} (d g_2)\right]\\
& \qquad\qquad = \mathbb{E} \left[ \int_{\mathsf{P}_\mathbb{X}^2} f_1(g_1 ) f_2 (g_2) \sum_{k\geq 1} \pi_{k}^2
\delta_{G_{k}^*} (d g_1) \delta_{G_k^*}(d g_2)\right]   \\
& \qquad\qquad\qquad + \mathbb{E} \left[ \int_{\mathsf{P}_\mathbb{X}^2} f_1(g_1) f_2 (g_2) \sum_{k_1 \not = k_2} 
\pi_{k_1} \pi_{k_2}  \delta_{G_{k_1}^*}(d g_1) \delta_{G_{k_2}^*}(d g_2)  \right].
\end{align*}
Observe that the $G_k^*$'s are all Dirichlet processes having the same law on the space $\mathsf{P}_\mathbb{X}$, which will be denoted by $\mathcal{P}$, depending on the total mass $\alpha$ and the base measure $H$. We also point out that the $G_k^*$'s are not independent random elements for different values of $k$, indeed they share the same random atoms $(\theta_l)_{l \geq 1}$, nevertheless if $k_1 \not = k_2 $, the distribution of  $(G_{k_1}^*, G_{k_2}^*)$ equals the distribution of $(G_1^*,G_2^*)$, which will be denoted by $\mathcal{P}_{[2]}$. Therefore, by applying the Tonelli--Fubini Theorem, we obtain 
\begin{align*}
&\mathbb{E} \left[ \int_{\mathsf{P}_\mathbb{X}^2} f_1(g_1 ) f_2 (g_2) Q(d g_1) Q (d g_2) \right]\\
& \qquad\qquad = \sum_{k \geq 1}  \mathbb{E} \pi_{k}^2   \mathbb{E} \int_{\mathsf{P}_\mathbb{X}^2} f_1 (g_1) f_2 (g_2)  \delta_{G_k^*} (d g_1) \delta_{G_k^*} (d g_2)\\
& \qquad\qquad\qquad +  \sum_{k_1 \not = k_2} \mathbb{E}
\pi_{k_1} \pi_{k_2}  \mathbb{E} \int_{\mathsf{P}_\mathbb{X}^2} f_1(g_1) f_2 (g_2) \delta_{G_{k_1}^*}(d g_1) \delta_{G_{k_2}^*}(d g_2)  \\
& \qquad\qquad = q_1  \int_{\mathsf{P}_\mathbb{X}^2} f_1 (g) f_2 (g) \mathcal{P} (d g) + (1-q_1)
\int_{\mathsf{P}_\mathbb{X}^2} f_1 (g_1) f_2 (g_2) \mathcal{P}_{[2] } (d g_1, d g_2 ),
\end{align*}
and then the thesis follows.\\

\subsection{Proof of Theorem \ref{thm:EPPF}}
We first evaluate the expected value in the definition of pEPPF \eqref{eq:pEPPF_def}, for $J=2$,
\begin{align*}
\mathbb{E}  \prod_{j=1}^2 \prod_{i=1}^s G_j^{n_{i,j}} (d y_i^*) & = 
\mathbb{E} \left[ \mathbb{E} \left[ \prod_{j=1}^2 \prod_{i=1}^s G_j^{n_{i,j}} (d y_i^*) \Big| Q  \right] \right]\\
& = \mathbb{E} \left[ \int_{\mathsf{P}_\mathbb{X}^2} \prod_{j=1}^2  \prod_{i=1}^s  g_j^{n_{i,j}} (d y_i^*) Q(d g_1) Q(d g_2)
\right].
\end{align*}
Now we apply  Equation \eqref{eq:mixed} to the previous integral
where the functions $f_j$, as $j=1,2$, are defined by
\[
f_j (g_j) := \prod_{i=1}^s g_j^{n_{i,j}} (d y_i^*),
\]
and then we get
\begin{equation}
\mathbb{E}  \prod_{j=1}^2 \prod_{i=1}^s G_j^{n_{i,j}} (d y_i^*) =
q_1 \mathbb{E} \prod_{j=1}^2 \prod_{i=1}^s (G_1^*)^{n_{i,j}} (d y_i^*)  +
(1-q_1) \mathbb{E} \prod_{j=1}^2 \prod_{i=1}^s (G_j^*)^{n_{i,j}} (d y_i^*).
\end{equation}
We finally integrate over the space $\mathbb{X}^s$ to get the result, i.e. \eqref{eq:pEPPF_CAM}.\\

\subsection{Proof of Proposition \ref{prp:noEX}}
Assume that the two samples $\bm{y}_1$ and $\bm{y}_2$ share $s_0>0$ distinct values denoted here as 
$y_{1,0}^*, \ldots y_{s_0, 0}^*$ with frequencies $(q_{1,j}, \ldots , q_{s_0, j})$ in the $j$-th sample, as $j=1,2$. We further suppose that the $j$-th sample contains exactly $s_j$ distinct 
observations not shared with the other one, and denoted here by $y_{1,j}^*, \ldots , y_{s_j, j}^*$, 
 for $j=1,2$; besides the vector of corresponding frequencies will be denoted as $(r_{1,j}, \ldots , r_{s_j,j})$. We obviously have that $s=s_0+s_1+s_2$.\\
Using the representation of the $G_k^*$'s in the CAM model \eqref{CAM1}--\eqref{CAM2}, we get
\begin{equation*}
 \mathbb{E} \prod_{j=1}^2 \prod_{i=1}^s (G_j^*)^{n_{i,j}} (d y_i^*)
= \mathbb{E} \prod_{j=1}^2 \prod_{i=1}^s \left(\sum_{l\geq 1} \omega_{l,j}\delta_{\theta_l} (d y_i^*)\right)^{n_{i,j}}.
\end{equation*}
Exploiting the partition of the data described at the beginning of the proof, we obtain
\begin{align*}
& \mathbb{E} \prod_{j=1}^2 \prod_{i=1}^s (G_j^*)^{n_{i,j}} (d y_i^*) \\
& \qquad = \mathbb{E} \prod_{j=1}^2 \prod_{i=1}^{s_j} \left(\sum_{l\geq 1} \omega_{l,j}^{r_{i,j}}\delta_{\theta_l} (d y_{i,j}^*)\right)
\prod_{i=1}^{s_0} \left( \sum_{l\geq 1}\omega_{l,1}^{q_{i,1}}\omega_{l,2}^{q_{i,2}} \delta_{\theta_l} (d y_{i,0}^*)\right)
+ o \left(  \prod_{j=0}^2 \prod_{i=1}^{s_j} H (d y_{i,j}^*) \right)\\
& \qquad = \sum_{\not =} 
\mathbb{E} \left[\prod_{j=1}^2 \prod_{i=1}^{s_j}  \omega_{l_{i,j},j}^{r_{i,j}}
\prod_{i=1}^{s_0} \omega_{l_{i,0},1}^{q_{i,1}}\omega_{l_{i,0},2}^{q_{i,2}} \right] \prod_{j=0}^2 \prod_{i=1}^{s_j} H (d y_{i,j}^*) 
+ o \left(  \prod_{j=0}^2 \prod_{i=1}^{s_j} H (d y_{i,j}^*) \right).
\end{align*} 
where the sum $\sum_{\not =} $ is extended over all possible values  of the distinct natural numbers $\{  l_{i,j} : \; i=1, \ldots , s_j ,\; j=0,1,2 \}$. Integrating over  $\mathbb{X}^s$ we get that
\begin{equation} \label{eq:2term_pEPPF}
\int_{\mathbb{X}^s} \mathbb{E} \prod_{j=1}^2 \prod_{i=1}^s (G_j^*)^{n_{i,j}} (d y_i^*)=
\sum_{\not =} 
\mathbb{E} \left[\prod_{j=1}^2 \prod_{i=1}^{s_j}  \omega_{l_{i,j},j}^{r_{i,j}}
\prod_{i=1}^{s_0} \omega_{l_{i,0},1}^{q_{i,1}}\omega_{l_{i,0},2}^{q_{i,2}} \right]
\end{equation}
which is positive whenever $s_0>0$.
 
\section{Truncated Blocked Gibbs Sampler for CAM}  \label{app:algo}

The posterior distribution is analytically intractable, which forces us to develop sampling algorithms to simulate from it. A P\'olya Urn representation would be too expensive in computational cost. Instead, we provide two different algorithms: a Blocked Gibbs sampler \citep{Ishwaran2001a}, mimicking the one proposed in \citep{Rodriguez2008} and a nested slice sampler \citep{Damien1999,Walker2007,Kalli2011}. Here we discuss the former one. The Truncated CAM model has the following form:
\begin{equation}
\begin{aligned}
y_{i,j}|\bm{ M, \theta} & \sim  N\left(\cdot | \theta_{M_{i,j}}\right), \quad
&M_{i,j}|\bm{S,\omega}  \sim \sum_{l=1}^{L} \omega_{l, S_j} \delta_l(\cdot),\\
\bm{\omega}_k | \bm{S}=\bm{\omega}_k &\sim GEM(\alpha), \quad
&S_{j}|\bm{\pi}  \sim \sum_{k=1}^{K} \pi_{k} \delta_k(\cdot),\\
\bm{\pi} &\sim GEM(\beta), \quad
&\theta_{l} \sim \pi(\theta_{l}).
\end{aligned}
\label{MembershipTRUNC}
\end{equation}
The Truncated version of CAM (TCAM) \eqref{MembershipTRUNC} can be extended to a Truncated version of DCAM (TDCAM) once the likelihood is modified according to \eqref{newLik}, In the following we report the Gibbs Sampler for the TDCAM, the extension of the sampler to accomodate the presence of a covariate linearly introduced.
Notice that some of the conditioning variables are collapsed \citep{Liu1994}, to enhance the speed of convergence and the mixing of the chains.

\subsection{TDCAM: Gibbs Sampler}
\label{tdcam}
Denote with $\bm{V}$ the vector containing all the variables of model \eqref{MembershipTRUNC}, and let $\bm{V}^{-\bm{s}}$  be the same vector $\bm{V}$ with the variable $\bm{s}$ removed.\\
The steps of the MCMC are the following:

\begin{enumerate}
	\item The full conditional for each $y_{i,j}$ is Truncated Normal, with support $\left[a_{z_{i,j}},a_{z_{i,j}+1} \right)$:
	$$
%	p\left(y_{i,j}|\bm{V}%Y_{ij}=
%	z_{i,j},\cdots\right) \sim TN(\mu_{M_{i,j}} ,\sigma^2_{M_{i,j}};a_{z_{i,j}},a_{z_{i,j}+1} ).
p\left(y_{i,j}|\bm{V}
	\right) \sim TN(\mu_{M_{i,j}} ,\sigma^2_{M_{i,j}};a_{z_{i,j}},a_{z_{i,j}+1} ).
	$$
	This can be easily done with the help of the R package \texttt{TruncatedNormal}, which relies on a recently improved algorithm exploiting minmax tilting \citep{Botev2017}.
	
	\item The full conditional for the observational cluster labels $M_{i,j}$, once the latent variable $\bm{y}$ is integrated out, is a discrete distribution, given by
	\begin{align*}
	%p\left(M_{i,j}=l|%Y_{ij}=
	%y_{i,j},S_j=k,\cdots -\bm{y_{i,j}}\right) &\propto \omega_{l, S_j} \Delta\Phi\left(a_{z_{i,j}};\mu_{M_{i,j}},\sigma^2_{M_{i,j}} \right),
	p\left(M_{i,j}=l|\bm{V}^{-\bm{y}}\right) &\propto \omega_{l, S_j} \Delta\Phi\left(a_{z_{i,j}};\mu_{M_{i,j}},\sigma^2_{M_{i,j}} \right),
	\end{align*}
	for any $i=1, \ldots , n_j$, $j=1, \ldots , J$.
	\item The full conditional for the distributional cluster labels $S_{j}$ is given by:
	\begin{equation*}
	p\left(S_{j}=k|\bm{V}^{-(\bm{y},\bm{M})}\right) \propto
	\pi_{k} \prod_{i=1}^{n_j}\Bigg(  \sum_{m=1}^{L}  \omega_{m, k}\Delta\Phi\left(a_{y_{i,j}};\mu_{m},\sigma^2_{m} \right)\Bigg),
	\end{equation*}
	for any $j=1, \ldots , J$.
	\item To sample the full conditional of the weights $\bm{\pi}$ at the distributional level, we first need to define $m^*_k$ as the number of groups assigned to the same distributional cluster $k$, where $\sum_{k=1}^{K}m^*_k=J$ the total number of observed groups. Then,
	\begin{align*}
	p\left(\bm{\pi}|\bm{V}\right)  &\propto p\left(\bm{S}|\bm{\pi}\right)p\left(\bm{\pi}\right) \propto p\left(\bm{\pi}\right) \pi_1^{m^*_1} \cdots \pi_K^{m^*_K}.
	\end{align*}
	Referring to the Stick Breaking representation, the  full conditional of the different sticks $v_k$, as $ k=1,\ldots,K $, equals:
	\[v_k \sim Beta\left(1+m_k^*, \beta+ \sum_{s=k+1}^{K}m_k^*\right).\]
	
	\item The derivation of the full conditional for $\bm{\omega}$ is similar, even it requires more care. We have
	\begin{align*}
	p\left(\bm{\omega}|\bm{V}\right)  \propto p\left(\bm{M}|\bm{S},\bm{\omega},\bm{\xi_0}\right)p\left(\bm{\omega}\right) \propto \prod_{k=1}^{K}p\left(\bm{\omega_k}\right) \prod_{j=1}^J\prod_{i=1}^{n_j}  \left( \sum_{l=1}^{L} \omega_{l, S_j} \delta_l(\cdot) \right).
	\end{align*}
	The previous formula can be decomposed into the product of $K$ elements and we can focus only on the case $S_j=k$. Let us define $n_{l,k}$ as the total number of observations assigned to the distributional cluster $k$ in the observational group $l$. The full conditional has this Stick-Breaking representation for $u_{l,k}, \forall k$:
	\[u_{l,k} \sim Beta\left(1+n_{l,k}, \alpha+ \sum_{r=l+1}^{L} n_{r,k}\right), \:\:\: l=1,\ldots,L.\]
	
	\item Let us define $n_{l, \cdot}=\sum_{k=1}^{K}n_{l,k}$ and  
	\[
	\bar{y}_{l,\cdot} := \frac{1}{n_{l,\cdot}}\sum_{i,j: M_{i,j}=l}y_{i,j}.
	\]
	Exploiting the conjugacy property, we obtain the full conditional for $\theta_l=\left(\mu_l,\sigma^2_l\right)$:
	\begin{align*}
	\left(\mu_{l},\sigma^2_{l}\right)|\bm{V} &\sim NIG\left(m_0^*,\kappa_0^*,\alpha_0^*,\beta_0^* \right).
	\end{align*}
	where
	\begin{equation*}
	m_0^*=\frac{\kappa_0 m_0+n_{l,\cdot}\bar{y}_{l,\cdot}}{\kappa_0+n_{l,\cdot}} \quad
	\kappa_0^*=\kappa_0 + n_{l,\cdot} \quad \alpha_0^* = \alpha_0+n_{l,\cdot}/2
	\end{equation*}
	and
	\begin{align*}
	\beta_0^* = \beta+0.5\left(\sum_{ij:\, M_{i,j}=l} \left(y_{i,j}-\bar{y}_{l,\cdot}\right)^2+\left(\frac{\kappa_0 n_{l,\cdot}}{\kappa_0+n_{l,\cdot}}\right)\left(y_{l,k}-m_0\right)^2\right).\\
	\end{align*}
	\item In case the precision parameters $\alpha$ and $\beta$ of the two DPs are assumed stochastic, distributed as $Gamma\left(a_\alpha,b_\alpha\right)$ and $Gamma\left(a_\beta,b_\beta\right)$, we can still exploiting conjugacy. The full conditionals distributions are:
	\begin{align*}
	\alpha|\bm{V}  \sim & Gamma\left( a_\alpha + (K-1),     b_\alpha -  \sum_{k=1}^{K-1}\log(1-v_k)\right) ,\\
	\beta|\bm{V} \sim & Gamma\left( a_\beta  + K\cdot(L-1), b_\beta  -  \sum_{l=1}^{L-1}\sum_{k=1}^{K} \log(1-u_{l,k})\right)	.
	\end{align*}	
\end{enumerate}
Notice that we naturally set $a_{y_{i,j}}=y_{i,j}$. As suggested in \citep{Rodriguez2008}, each step of this algorithm can be parallelized, in order to gain computational speed.

\subsection{Linearly incorporating a covariate in the Likelihood}
\label{covariate}

If we want to linearly add regressor to the mean, we update model \eqref{MembershipTRUNC} simply assuming:
\begin{equation}
z_{i,j}|y_{i,j} \sim \sum_{g=0}^{+\infty}\delta_g(\cdot) \mathbf{1}_{ \left[a_g,a_{g+1}\right)}\left(y_{i,j}\right) \quad \quad
y_{i,j}|\bm{ M, \mu,\sigma^2}  \sim  N\left(\mu_{M_{i,j}}+\beta X_j,\sigma^2_{M_{i,j}}\right).
\label{newLikLS111}
\end{equation}
We espouse such representation because of its interpretability: the latent continuous random variable $y_{i,j}$ can be decomposed as $y_{i,j}=\mu_{M_{i,j}} + \beta X_j+ \varepsilon_{i,j},$ where $\varepsilon\sim N\left(0,\sigma^2_{M_{i,j}}\right)$. In other words, we model the every single latent value as the sum of an effect specific for each observational cluster, an effect due to the regressor value of each individual multiplied by a overall coefficient and a completely random effect, whose entity still depends on the observational cluster. This choice does not complicate the algorithm presented in the previous section: the full conditionals 1-3 are preserved if the mean is modified accordingly, switching from $\mu_{M_{i,j}}$ to $\mu_{M_{i,j}}+\beta X_j$. Step 6 remains the same once we substitute $y_{i,j}$ with $d_{i,j}=y_{i,j}-\beta X_j$. Steps 4, 5 and 7 are not affected by this change.\\
Finally, if we assume $\beta \sim N\left(m_\beta, \frac{1}{\kappa_\beta}\right)$, we can perform inference on the introduced coefficient.
Define $R^1=\sum_{i,j}\frac{X_j^2}{\sigma^2_{M_{i,j}}}$ and $R^2=\sum_{i,j}\frac{d_{i,j}\cdot X_j}{\sigma^2_{M_{i,j}}}$. The full conditional for $\beta$ is:
\begin{equation*}
\beta|\bm{V}\sim N\left( \frac{m_\beta \kappa_\beta + R^2}{\kappa_\beta+R^1}, \frac{1}{\kappa_\beta+R^1}    \right).
\end{equation*}
This framework can be easily extended to accommodate for the presence of multiple covariates.

\section{Error bounds in total variation distance} 
%\bch Metti a posto ste sezione \ech

In Section \ref{app:algo} we have depicted a truncated blocked Gibbs sampler, we now evaluate the truncation error arising from these algorithms 
for the CAM model (Section \ref{app:error_CAM}) and the CAMM (Section \ref{app:error_CAMM}). The errors between the random distribution and its truncated version will be evaluated using the total variation distance.
For the reader's convenience  we recall that if  $P, Q \in \mathsf{P}_{\mathbb{X}}$ are  probability measures defined on  $\left(\mathbb{X},\mathcal{X}\right)$, the  distance in total variation between $P$ and $Q$ is defined as
$$d_{TV}(P,Q) = \sup_{A\in \mathcal{X}}|P(A)-Q(A)|.$$
If $P,Q$ are absolutely continuous w.r.t. a measure $\mu$ then it can be expressed as
$$d_{TV}(P,Q) =\frac12 \int_{\mathcal{X}}  \biggl|\frac{dP}{d\mu}-\frac{dQ}{d\mu}\biggr|d\mu.$$
Moreover, if $\mathcal{X}$ is a discrete space or if $P$ and $Q$ are concentrated on a countable set $\Omega\subset\mathcal{X}$ then
$$d_{TV}(P,Q) =\frac12 \sum_{x\in \Omega} \biggl|P(x)-Q(x)\biggr| .$$

%In the sequel it is useful to recall that if $a_i$ and $b_i$ are such that $|a_i|\leq 1$ and $|b_i|\leq 1$ $\forall i$,  the following inequality holds:
%$$(*) \:\:\:\: \bigg|\prod_{i=1}^{n}a_i-\prod_{i=1}^{n}b_i\bigg|\leq \sum_{i=1}^{n} |a_i-b_i|.$$

\subsection{Truncation error in CAM} \label{app:error_CAM}

In this section we quantify the error committed when we replace  the random probability measures $G_j$ with the corresponding truncated versions.
We recall that $G_1, \ldots , G_J | Q \stackrel{i.i.d.}{\sim} Q$, where 
$Q$ has been defined in \eqref{CAM2}:
$$ Q=\sum_{k\geq 1}\,  \pi_k \, \delta_{G^*_k }, \quad
G_k^* = \sum_{l \geq 1} \omega_{l,k} \delta_{\theta_l}. $$
%We now consider the following truncated version of $Q$
%$$ Q^{(K,L)} := \sum_{k=1}^{K} \tilde{\pi}^{(K,L)}_k \delta_{ G_k^{(K,L)} },$$
In order to formally define the truncated versions of $G_1, \ldots , G_J$, we exploit  the latent random variable $\xi_j|Q \stackrel{iid}{\sim} \sum_{k=1}^{+\infty} \pi_k \delta_k $, as $j=1 , \ldots , J$, which  identifies the mixture component from which $G_j$ is generated, conditionally on $Q$. Thus, conditionally on the value $\xi_j = k$, the truncated random probability measures associated to each $G_j$ are formally defined as follows
\begin{equation} \label{eq:truncation_Gj}
G^{(K,L)}_j =\begin{cases}
& \sum_{l=1}^{L} \omega_{l,k}^{(K,L)}\delta_{\theta_l} \quad \quad \text{if  } \xi_j \leq K\\
&\sum_{l=1}^{L} \omega_{l,K}^{(K,L)}\delta_{\theta_l} \quad \quad \text{if  } \xi_j > K
\end{cases}  
\end{equation}
and 
\begin{align*}
&\omega_{l,k}^{(K,L)} = 	\omega_{l,k} \quad \quad \quad \quad \text{if  } l \leq L-1, \quad \text{and} \quad 
\omega_{L,k}^{(K,L)} =  1-\omega_{1,k}-\ldots-\omega_{L-1,k} \\
&\pi_{k}^{(K,L)} = \pi_k \quad \quad \quad \quad \text{if  } k \leq K-1, \quad  \text{and} \quad  \pi_{K}^{(K,L)} = 1-\pi_1-\ldots-\pi_{K-1}
\end{align*}
where $K, L>0$ define the truncation levels for the different random probability measures.
%In this way, we can write $G^{(K,L)}_1,\ldots,G^{(K,L)}_J|Q^{(K,L)}\stackrel{iid}{\sim}Q^{(K,L)}$.\\
\begin{proposition} \label{prp:DTV_CAM}
Let $G_j | Q \sim Q$ and $G_j^{(K,L)}$ the truncation of $G_j$ defined in 
\eqref{eq:truncation_Gj}, then the expected value of the distance in total variation between them can be estimated as follows:
\begin{equation} \label{eq:dTV}
\E{d_{TV}\left(G_j,G_j^{(K,L)} \right) } \leq 	\left(1-\left(\frac{\alpha}{1+\alpha}\right)^K\right)\left(\frac{\beta}{1+\beta}\right)^L+\left(\frac{\alpha}{1+\alpha}\right)^K,
\end{equation}
for any $j=1,\ldots , J$.
\end{proposition}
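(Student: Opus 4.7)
The natural strategy is to condition on the latent index $\xi_j$, which identifies the mixture component of $Q$ from which $G_j$ is drawn, and split the analysis into the two cases appearing in the definition \eqref{eq:truncation_Gj} of $G_j^{(K,L)}$, namely $\{\xi_j \leq K\}$ and $\{\xi_j > K\}$. By the tower property,
\begin{equation*}
\E{d_{TV}(G_j, G_j^{(K,L)})}
= \E{d_{TV}(G_j, G_j^{(K,L)}) \,\mathbf{1}_{\{\xi_j \leq K\}}}
+ \E{d_{TV}(G_j, G_j^{(K,L)})\,\mathbf{1}_{\{\xi_j > K\}}},
\end{equation*}
so it suffices to bound each term separately and then to compute $\Prob{\xi_j \leq K}$ and $\Prob{\xi_j > K}$ via the GEM stick-breaking moments.

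On the event $\{\xi_j \leq K\}$, both $G_j$ and $G_j^{(K,L)}$ live on the common countable atom set $\{\theta_l\}_{l \geq 1}$ and share the \emph{same} weights $\omega_{l,\xi_j}$ for $l \leq L-1$; only the coordinate $l=L$ is modified, to $\omega_{L,\xi_j}^{(K,L)} = 1 - \sum_{r \leq L-1} \omega_{r,\xi_j} = \sum_{r \geq L} \omega_{r,\xi_j}$, and the coordinates $l \geq L+1$ are dropped. A direct coordinate-wise computation of the discrete total variation then yields
\begin{equation*}
d_{TV}(G_j, G_j^{(K,L)}) = \tfrac{1}{2}\Bigl(\sum_{l \geq L+1}\omega_{l,\xi_j} + \sum_{l \geq L+1}\omega_{l,\xi_j}\Bigr) = \sum_{l \geq L+1}\omega_{l,\xi_j},
\end{equation*}
where the first sum comes from the corrected coordinate $l = L$ and the second from the discarded tail. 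Taking expectations and using $\E{\omega_{l,k}} = (1+\beta)^{-1}(\beta/(1+\beta))^{l-1}$ together with the geometric series reduces the bound on this term to $\Prob{\xi_j \leq K}\cdot(\beta/(1+\beta))^L \leq (1 - (\alpha/(1+\alpha))^K)(\beta/(1+\beta))^L$.

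On the event $\{\xi_j > K\}$, the measure $G_j$ is drawn from some $G_{\xi_j}^*$ with $\xi_j > K$ while $G_j^{(K,L)}$ uses instead the truncated weights of $G_K^*$; here I would simply use the trivial bound $d_{TV}(G_j,G_j^{(K,L)}) \leq 1$. Some precision is lost, but the probability of this event is already small: a geometric sum on $\E{\pi_k} = (1+\alpha)^{-1}(\alpha/(1+\alpha))^{k-1}$ gives $\Prob{\xi_j > K} = (\alpha/(1+\alpha))^K$, which contributes exactly the second summand in \eqref{eq:dTV}.

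The main thing to get right is the coordinate-wise TV computation in the first case (verifying that the modification at $l=L$ absorbs exactly the tail so that the $1/2$ prefactor produces $\sum_{l \geq L+1}\omega_{l,\xi_j}$, and not something larger). Once this identity is established, the rest reduces to elementary evaluations of the two geometric series associated with the GEM moments of $\{\omega_{l,k}\}$ and $\{\pi_k\}$, and the claimed inequality follows by summing the two conditional contributions.
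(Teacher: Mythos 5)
Your proposal is correct and follows essentially the same route as the paper's proof: condition on the latent index $\xi_j$, compute the total variation exactly as the tail mass $\sum_{l\geq L+1}\omega_{l,\xi_j}$ when $\xi_j\leq K$, use the trivial bound $1$ when $\xi_j>K$, and evaluate the GEM stick-breaking moments to get $(1-(\alpha/(1+\alpha))^K)(\beta/(1+\beta))^L+(\alpha/(1+\alpha))^K$. The only cosmetic difference is that you split via indicators and the probability $\Prob{\xi_j\leq K}$, whereas the paper conditions on $Q$ and factorizes $\E{\sum_{k\leq K}\pi_k}$ against the expected residual weight, which amounts to the same computation.
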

\begin{proof}
First of all observe that, conditioning on $\xi_j =k$, we recognize two distinct situations to upper bound the total variation distance between 
$G_j$ and its truncated counterpart as described below.
\begin{enumerate}
    \item If $\xi_j =k \leq K$, then we have
\begin{align*}
d_{TV}(G_j,G_j^{(K,L)}) &= \frac12 \left( \sum_{l=1}^{L} |\ot{l}{k}-\ot{l}{k}^{(K,L)}| + |\ot{l}{k}-\ot{l}{k}^{(K,L)}|+\sum_{l\geq L+1} |\ot{l}{k}-0|  \right) \\ &=
\frac12 \left( |\ot{L}{k}-1 + \ot{1}{k}+\ldots+\ot{L-1}{k}|+\sum_{l\geq L+1} \ot{l}{k}  \right)\\&=
\frac12 \left( 1-\sum_{l=1}^{L}\ot{l}{k}  +\sum_{l\geq L+1}\ot{l}{k}  \right) = \left( 1-\sum_{l=1}^{L}\ot{l}{k}\right).
\end{align*}
\item If $\xi_j >K$, we use the following trivial upper bound
$d_{TV}(G_j,G_j^{(K,L)}) \leq 1$.
\end{enumerate}
In light of the previous considerations, we are now ready to compute
\begin{align*}
\mathbb{E}\left[d_{TV}\left(G_j,G_j^{(K,L)} \right) \right] &= \E{\E{d_{TV}\left(G_j,G_j^{(K,L)} \right)|\xi_j,Q }}\\
& =
\E{ 	\sum_{k=1}^{K} \pi_k \E{d_{TV}\left(G_j,G_j^{(K,L)} \right)|\xi_j=k,Q }}\\
& \qquad\qquad\qquad+
	\E{ \sum_{k=K+1}^{+\infty} \pi_k  \E{  d_{TV}\left(G_j,G_j^{(K,L)} \right)|\xi_j=k,Q }}
\\ 
&  \leq  \E{ 
	\sum_{k=1}^{K} \pi_k \left(1-\sum_{l=1}^{L}\ot{l}{k}\right) +  \sum_{k=K+1}^{+\infty} \pi_k } \\ 
	&  =
\E{ 
	\sum_{k=1}^{K} \pi_k } \cdot \E{ \left(1-\sum_{l=1}^{L}\ot{l}{k}\right) } + \E{ \sum_{k=K+1}^{+\infty} \pi_k }  \\ %(**dubbio2**)
	& \leq
\E{ \left(1-\sum_{l=1}^{L}\ot{l}{k}\right) } + \E{ 1-\sum_{k=1}^{K} \pi_l }  \\ 
&  =
\left(1-\left(\frac{\alpha}{1+\alpha}\right)^K\right)\left(\frac{\beta}{1+\beta}\right)^L+\left(\frac{\alpha}{1+\alpha}\right)^K
\end{align*}
where the last equality follows by straightforward calculations based on the stick--breaking representation of the weights.

%Se we can conclude that $$\E{d_{TV}\left(G_j,G_j^{(K,L)} \right) } \leq 	\left(1-\left(\frac{\alpha}{1+\alpha}\right)^K\right)\left(\frac{\beta}{1+\beta}\right)^L+\left(\frac{\alpha}{1+\alpha}\right)^K.$$
%dubbio1 perché sparisce il valore atteso più interno e viene sostituito con la d calcolata?\\
%dubbio2 perhcé maggiorare con 1 il primo termine e non calcolarlo?\\
%Per me ultima riga andrebbe modificata as
%\[ \E{d_{TV}\left(G_i,G_j^{(K,L)} \right) } \leq\textcolor{red}{ \left(1-\left(\frac{\alpha}{1+\alpha}\right)^L\right)}	\left(\frac{\beta}{1+\beta}\right)^K+\left(\frac{\alpha}{1+\alpha}\right)^L\]
%dubbio3 Diverso da Rodriguez, come mai?\\
\end{proof}

\subsection{Approximation Error in Mixture Models (CAMM)}
\label{app:error_CAMM}

Consider $J$ groups, each of them containing $n_j$ observations, $j=1,\ldots,J$. Denote by $\bm{y}_j=\left(y_{1,j},\ldots,y_{n_j,j}\right)$ for $j=1,\ldots,J$ the observations from the $j$-th component of the mixture model $y_{i,j}|\theta_{i,j}\sim f(\cdot|\theta_{i,j})$ with $\theta_{i,j}|G_1,\ldots,G_J \sim G_j$ where the $G_j$'s are generated according to a CAM. We suppose that $\theta_{i,j} \in \Theta$, where $\Theta$ is a Polish space equipped with its corresponding Borel $\sigma$--field $\mathcal{T}$.
We further denote by $\bm{y} = (\bm{y}_1, \ldots , \bm{y}_J)$ the vector containing all the observations.
We would like to upper bound the distance in total variation between the law of the data $\bm{y}$
\[ \pi\left(\bm{y}\right) = \E{ \prod_{j=1}^{J}\prod_{i=1}^{n_j} \int_{\Theta} f\left(y_{i,j}|\theta_{i,j}\right)G_j(d\theta_{i,j})   },   \]
and the law of the data $\pi^{(K,L)}$ when the random probability measures $G_j$'s are replaced with the corresponding truncated version $G_j^{(K,L)}$
defined in \eqref{eq:truncation_Gj}, i.e.
\[ \pi^{(K,L)}\left(\bm{y}\right) = \E{ \prod_{j=1}^{J}\prod_{i=1}^{n_j} \int_{\Theta} f\left(y_{i,j}|\theta_{i,j}\right)G^{(K,L)}_j(d\theta_{i,j})   }.   \]

\begin{proposition}
The distance in total variation between $\pi$ and  $\pi^{(K,L)}$ satisfies
\begin{equation} \label{eq:distTV_mixture}
    d_{TV}\left(\pi,\pi^{(K,L)}\right) \leq 
    N\left[ \left(\frac{\beta}{1+\beta}\right)^L+\left(\frac{\alpha}{1+\alpha}\right)^K\right],
\end{equation}
where $N= n_1 +\cdots + n_J$.
\end{proposition}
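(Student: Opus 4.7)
The plan is to combine convexity of total variation, the product/tensorization inequality for TV between product distributions, the contraction inequality for mixtures (integrating a kernel against a varying mixing measure cannot increase TV), and finally the pointwise TV estimate between $G_j$ and $G_j^{(K,L)}$ established in Proposition \ref{prp:DTV_CAM}.

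First I would set up a coupling: build $Q$ via the stick--breaking construction and define $G_j$ and $G_j^{(K,L)}$ on the same probability space using the same latent allocation variables $\xi_j$, as in \eqref{eq:truncation_Gj}. Write $\pi$ and $\pi^{(K,L)}$ as mixtures over the common distribution of $(G_1,\dots,G_J)$ and $(G_1^{(K,L)},\dots,G_J^{(K,L)})$. By convexity of $d_{TV}$ with respect to the mixing measure,
\[
d_{TV}\left(\pi, \pi^{(K,L)}\right)
\;\leq\;
\E{ d_{TV}\!\left( \bigotimes_{j=1}^J \bigotimes_{i=1}^{n_j} f_{G_j},\; \bigotimes_{j=1}^J \bigotimes_{i=1}^{n_j} f_{G_j^{(K,L)}}\right)},
\]
where $f_{G}(y):=\int_\Theta f(y\mid\theta)\,G(d\theta)$.

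Next I would tensorize. For product probability measures, $d_{TV}\!\left(\bigotimes_k P_k, \bigotimes_k Q_k\right)\le \sum_k d_{TV}(P_k,Q_k)$, which follows from a standard telescoping argument. Applied inside the expectation, this gives
\[
d_{TV}\!\left( \bigotimes_{j,i} f_{G_j},\; \bigotimes_{j,i} f_{G_j^{(K,L)}}\right)
\;\leq\; \sum_{j=1}^J\sum_{i=1}^{n_j} d_{TV}\!\left( f_{G_j},\; f_{G_j^{(K,L)}}\right).
\]
Then I invoke the contraction inequality $d_{TV}(f_G, f_{G'})\le d_{TV}(G,G')$, which is immediate from $|\int f(\cdot\mid\theta)(G-G')(d\theta)|\le \int f(\cdot\mid\theta)\,|G-G'|(d\theta)$ and Fubini. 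Combining the three displays and taking expectations,
\[
d_{TV}\left(\pi, \pi^{(K,L)}\right) \;\leq\; \sum_{j=1}^J\sum_{i=1}^{n_j} \E{ d_{TV}\!\left( G_j, G_j^{(K,L)}\right)}.
\]

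Finally I would invoke Proposition \ref{prp:DTV_CAM}, which yields
\[
\E{ d_{TV}\!\left( G_j, G_j^{(K,L)}\right)} \;\leq\; \left(1-\left(\tfrac{\alpha}{1+\alpha}\right)^{K}\right)\left(\tfrac{\beta}{1+\beta}\right)^{L} + \left(\tfrac{\alpha}{1+\alpha}\right)^{K} \;\leq\; \left(\tfrac{\beta}{1+\beta}\right)^{L} + \left(\tfrac{\alpha}{1+\alpha}\right)^{K},
\]
uniformly in $j$. Summing over $j$ and $i$ produces the factor $N=\sum_j n_j$ and delivers \eqref{eq:distTV_mixture}. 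The only slightly delicate point is making sure the coupling argument is clean: since the same latent $\xi_j$ and the same stick--breaking weights drive both $G_j$ and $G_j^{(K,L)}$, the conditional expectation step is legitimate and Proposition \ref{prp:DTV_CAM} applies directly. Everything else reduces to the three textbook inequalities (convexity, tensorization, mixture contraction) composed in sequence.
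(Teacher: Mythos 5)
Your proof is correct and follows essentially the same route as the paper: both arguments reduce $d_{TV}\left(\pi,\pi^{(K,L)}\right)$ to $\sum_{j=1}^{J} n_j\, \E{d_{TV}\left(G_j,G_j^{(K,L)}\right)}$ under the pathwise coupling built into \eqref{eq:truncation_Gj}, and then conclude via Proposition \ref{prp:DTV_CAM}. The only difference is cosmetic ordering: you condition on the random measures first, then tensorize and use the kernel contraction, whereas the paper integrates out the kernel by Tonelli--Fubini, exchanges expectation and supremum, and bounds the product difference with Billingsley's lemma --- the ingredients are the same.
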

\begin{proof}
The distance  $d_{TV}\left(\pi,\pi^{(K,L)}\right)$ can be evaluated as follows:
\begin{align*}
& d_{TV}\left(\pi,\pi^{(K,L)}\right) \\
&\quad = \frac12 \int_{\mathbb{X}^N} \bigg|\frac{d\pi}{d\bm{y}}-\frac{d\pi^{K,L}}{d\bm{y}}\bigg|d\bm{y} \\ 
&\quad = \frac12 \int_{\mathbb{X}^N} \bigg|\E{ 
	\prod_{j=1}^{J}\prod_{i=1}^{n_j} \int_{\Theta} f\left(y_{i,j}|\theta_{i,j}\right)G_j(d\theta_{i,j})-
	\prod_{j=1}^{J}\prod_{i=1}^{n_j} \int_{\Theta} f\left(y_{i,j}|\theta_{i,j}\right)G^{(K,L)}_j(d\theta_{i,j})  
}\bigg| d\bm{y} \\ 
& \quad =
\frac12 \int_{\mathbb{X}^N} \bigg|\mathbb{E} \bigg[
	\int_{\Theta^N}
	\prod_{j=1}^{J}\prod_{i=1}^{n_j} f\left(y_{i,j}|\theta_{i,j}\right)
	\prod_{j=1}^{J}\prod_{i=1}^{n_j} G_j(d\theta_{i,j})\\
	& \quad \qquad\qquad\qquad-
	\int_{\Theta^N}\prod_{j=1}^{J}\prod_{i=1}^{n_j} f\left(y_{i,j}|\theta_{i,j}\right)
	\prod_{j=1}^{J}\prod_{i=1}^{n_j}G^{(K,L)}_j(d\theta_{i,j})  
\bigg]\bigg| d\bm{y} .
\end{align*}
By an application of the  Tonelli-Fubini theorem and the Jensen inequality, we obtain
\begin{align*}
& d_{TV}\left(\pi,\pi^{(K,L)}\right) \\
& \quad =
\frac12 \int_{\mathbb{X}^N} \bigg|\int_{\Theta^N} \prod_{j=1}^{J}\prod_{i=1}^{n_j} f\left(y_{i,j}|\theta_{i,j}\right)
\E{
	\prod_{j=1}^{J}\prod_{i=1}^{n_j} G_j(d\theta_{i,j})-
	%\int_{\Theta^n}\prod_{i=1}^{d}\prod_{j=1}^{n_j} f\left(x_{ij}|\theta_{ij}\right)
	\prod_{j=1}^{J}\prod_{i=1}^{n_j}G^{(K,L)}_j(d\theta_{i,j})  
}\bigg| d\bm{y} \\ 
& \quad  \leq 
\frac12 \int_{\mathbb{X}^N} \int_{\Theta^N} \prod_{j=1}^{J}\prod_{i=1}^{n_j} f\left(y_{i,j}|\theta_{i,j}\right)d\bm{y}
\bigg| \E{
	\prod_{j=1}^{J}\prod_{i=1}^{n_j} G_j(d\theta_{i,j})-
	\prod_{j=1}^{J}\prod_{i=1}^{n_j}G^{(K,L)}_j(d\theta_{i,j})  
}\bigg|  \\ 
& \quad  =
\frac12 \int_{\Theta^N} \underbrace{\int_{\mathbb{X}^N}  \prod_{j=1}^{J}\prod_{i=1}^{n_j} f\left(y_{i,j}|\theta_{i,j}\right)d\bm{y}}_{=1}
\bigg| 
	\underbrace{\E{\prod_{j=1}^{J}\prod_{i=1}^{n_j} G_j(d\theta_{i,j})}}_{=:m}-
	\underbrace{\E{\prod_{j=1}^{J}\prod_{i=1}^{n_j}G^{(K,L)}_j(d\theta_{i,j})}}_{=:m^{(K,L)}}\bigg|  \\ 
& \quad =
\frac12 \int_{\Theta^N} 
\bigg| \E{
	\prod_{j=1}^{J}\prod_{i=1}^{n_j} G_j(d\theta_{i,j})-
	\prod_{j=1}^{J}\prod_{i=1}^{n_j}G^{(K,L)}_j(d\theta_{i,j}) 
}\bigg|  = d_{TV}\left(m,m^{(K,L)}\right) \\
& \quad =
\sup_{A_{i,j} \in \mathcal{T}} \bigg| \E{
	\prod_{j=1}^{J}\prod_{i=1}^{n_j} G_j(A_{i,j})-
	\prod_{j=1}^{J}\prod_{i=1}^{n_j}G^{(K,L)}_j(A_{i,j}) 
}\bigg|.
\end{align*}
We can exchange the expected value with the supremum to get:
\begin{align*}
d_{TV}\left(\pi,\pi^{(K,L)}\right) &\leq 
\sup_{A_{i,j} \in \mathcal{T}}  \E{\bigg|
	\prod_{j=1}^{J}\prod_{i=1}^{n_j} G_j(A_{i,j})-
	\prod_{j=1}^{J}\prod_{i=1}^{n_j}G^{(K,L)}_j(A_{i,j}) 
	\bigg|}\\
 & \leq 
\E{\sup_{A_{i,j} \in \mathcal{T}}\bigg|
	\prod_{j=1}^{J}\prod_{i=1}^{n_j} G_j(A_{i,j})-
	\prod_{j=1}^{J}\prod_{i=1}^{n_j}G^{(K,L)}_j(A_{i,j}) 
	\bigg|}.
\end{align*}
We now apply \citep[Lemma 1, pg. 358]{Billy} to obtain
\begin{align*}
 d_{TV}\left(\pi,\pi^{(K,L)}\right)  & \leq 
\E{\sup_{A_{i,j} \in \mathcal{T}}
	\sum_{j=1}^{J}\sum_{i=1}^{n_j} \bigg|G_j(A_{i,j})-G^{(K,L)}_j(A_{i,j}) 
	\bigg|} \\
	& \leq 
\E{	\sum_{j=1}^{J}\sum_{i=1}^{n_j} \sup_{A_{i,j} \in \mathcal{T}}\bigg|G_j(A_{i,j})-G^{(K,L)}_j(A_{i,j}) 
	\bigg|}
\end{align*}
where we recognize that $\sup_{A_{i,j} \in \mathcal{T}}\bigg|G_j(A_{i,j})-G^{(K,L)}_j(A_{i,j}) 
\bigg|=d_{TV}\left( G_j,G^{(K,L)}_j  \right)$. As a consequence, by an application of Proposition \ref{prp:DTV_CAM}, we get
\begin{align*}
d_{TV}\left(\pi,\pi^{(K,L)}\right) & \leq 
\E{	\sum_{j=1}^{J}\sum_{i=1}^{n_j} d_{TV}\left( G_j,G^{(K,L)}_j  \right)}=
\sum_{j=1}^{J}\sum_{i=1}^{n_j} \E{d_{TV}\left( G_j,G^{(K,L)}_j  \right)}\\
 & \leq  N\left[ \left(\frac{\beta}{1+\beta}\right)^L+\left(\frac{\alpha}{1+\alpha}\right)^K\right]
\end{align*}
and the result follows.
\end{proof}

\section{Additional Details about the Nested Slice Sampler}\label{app:moredetailslice}

As we mentioned, at each iteration we sample among $K^*$ possible distributional cluster labels and $L^{**} = \max\{ L_1^*,\ldots,L^*_{K^*}\}$ possible observational labels.
If $\xi^D_k=\pi_k$ and $\xi^O_{l,k}=\omega_{l,k}$, the values are the lowest integers that ensure, respectively, that
\begin{equation}
\sum_{k=1}^{K^*} \pi_k \geq 1-\min_{j\in \{1,\ldots,J\}} u^D_j\quad \text{and} \quad\sum_{l=1}^{L_k^*} \omega_{l,k} \geq 1-\min_{i \in \{1,\ldots,n_j\}} u^O_{i,j}\quad \forall k=1,\ldots,K^*.
\end{equation}
Instead of relying on the efficient-dependent version, according to \citet{Kalli2011,Hong2016}, we adopt the following geometric deterministic sequences: $\xi^D_k=\left(1-\kappa_D\right)\kappa_D^{k-1}$, and $\xi^O_{l,k}= \xi^O_l=\left(1-\kappa_O\right)\kappa_O^{l-1}$. In this case, it is sufficient to focus only on one observational deterministic sequence, being $\bm{\xi}^O_k$ the same for every $k$. Thus, putting $u^D_{min}=\min_{j}u^D_{j}$ and $u^O_{min}=\min_{i,j}u^O_{i,j}$, we can compute the two thresholds at each MCMC sweep:
\[ K^*= \floor*{\frac{\log\left(u^D_{min}\right)- \log\left(1-\kappa_D\right)}{\log\left(\kappa_D\right)}}, \quad \quad
L^*= \floor*{\frac{\log\left(u^O_{min}\right)- \log\left(1-\kappa_O\right)}{\log\left(\kappa_O\right)}}.
\]
If the precision parameters $\alpha$ and $\beta$ of the two DPs are assumed stochastic and conjugate Gamma distributions are adopted, the full conditionals can be sampled following the procedure proposed in  \citet{Walker2007,Escobar2012}: denote with $c^*$ the number of unique values sampled and with $n$ the number of observations ($n=J$ when the Outer DP is considered, otherwise $n = \sum_{j=1}^{J}n_j$ ). Then the precision parameter of the DP $\gamma$ ($\gamma=\alpha$ when Outer DP, $\gamma=\beta$ otherwise), for both the DPs, can be sampled in two stage, introducing another latent variable $\eta$:
$(a)$ sample $\eta|\gamma,c^* \sim Beta\left(\gamma+1,n\right)$ and $(b)$ sample a new $\gamma$ from the mixture $$\gamma \sim \pi _ { \eta } G ( a + k , b - \log ( \eta ) ) + \left( 1 - \pi _ { \eta } \right) G \left( a + k - 1 , b - \log ( \eta ) \right)$$
where $\pi _ { \eta }= \pi _ { \eta } / \left( 1 - \pi _ { \eta } \right) = ( a + k - 1 ) / \{ n ( b-\log(\eta)\}$.

The exploration of the space of cluster membership labels is a delicate task.
Differently from the marginal specification, where simulation methods are devised in a way that the resulting Markov Chain explores the space of the partitions as equivalence classes over cluster values, a conditional/stick-breaking specification operates on the space of the explicit cluster labels \citep{Porteous2005}.
In this second scenario, it could happen that the chain exploring the cluster membership shows poor mixing, being stuck in one of the local maxima of the posterior. %This happens more frequently when a quite large amount of data is available. 
To overcome this issue, the label switching moves described in \citep{Papaspiliopoulos2008, Hastie2015} can be added to our setup to improve the mixing.

\section{Additional Plots}
\subsection{Densities of the three scenarios considered in the simulation study}
\begin{figure}[ht!]
    \centering
    \includegraphics[scale=.60]{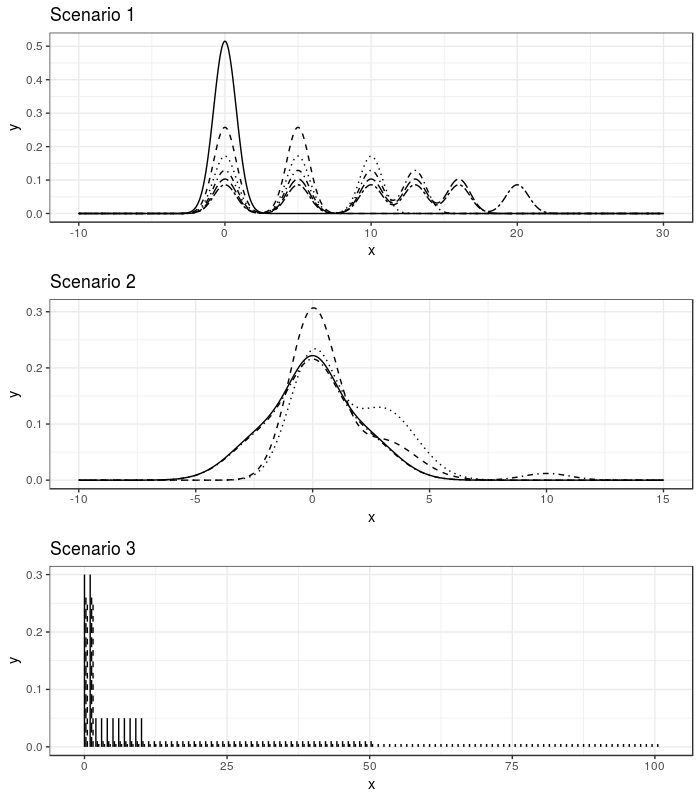}
    \caption{The densities distributions of each unit in three scenarios considered.}
    \label{3sce}
\end{figure}
\FloatBarrier

\subsection{Additional plots for the microbiome application}
\FloatBarrier
\textbf{Visual description of the dataset}
\begin{figure}[ht!]
    \centering
    \includegraphics[scale=.5]{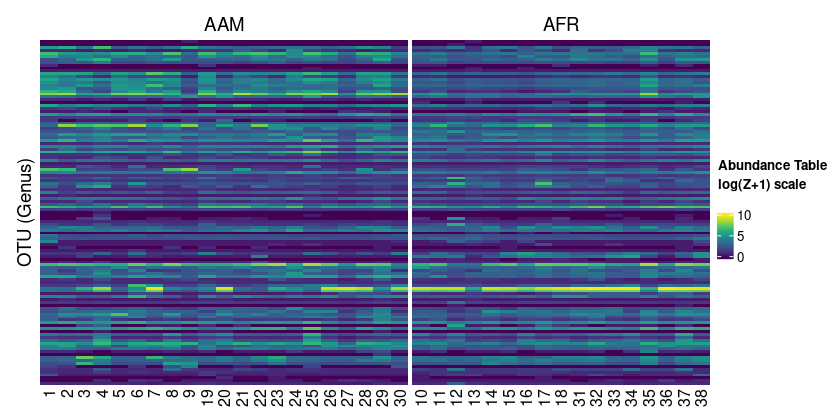}
    \caption{Heatmap of the considered abundance table. The OTUs (at the Genus level) are reported by row, while the columns indicate the subjects, divided by nationality. The count data are transformed as $\log(\bm{Z}+1)$.}
    \label{fig:heatmap}
\end{figure}
\textbf{Percentage of abundance classes per OTU}
\begin{figure}[ht!]
    \centering
    \includegraphics[scale=.34]{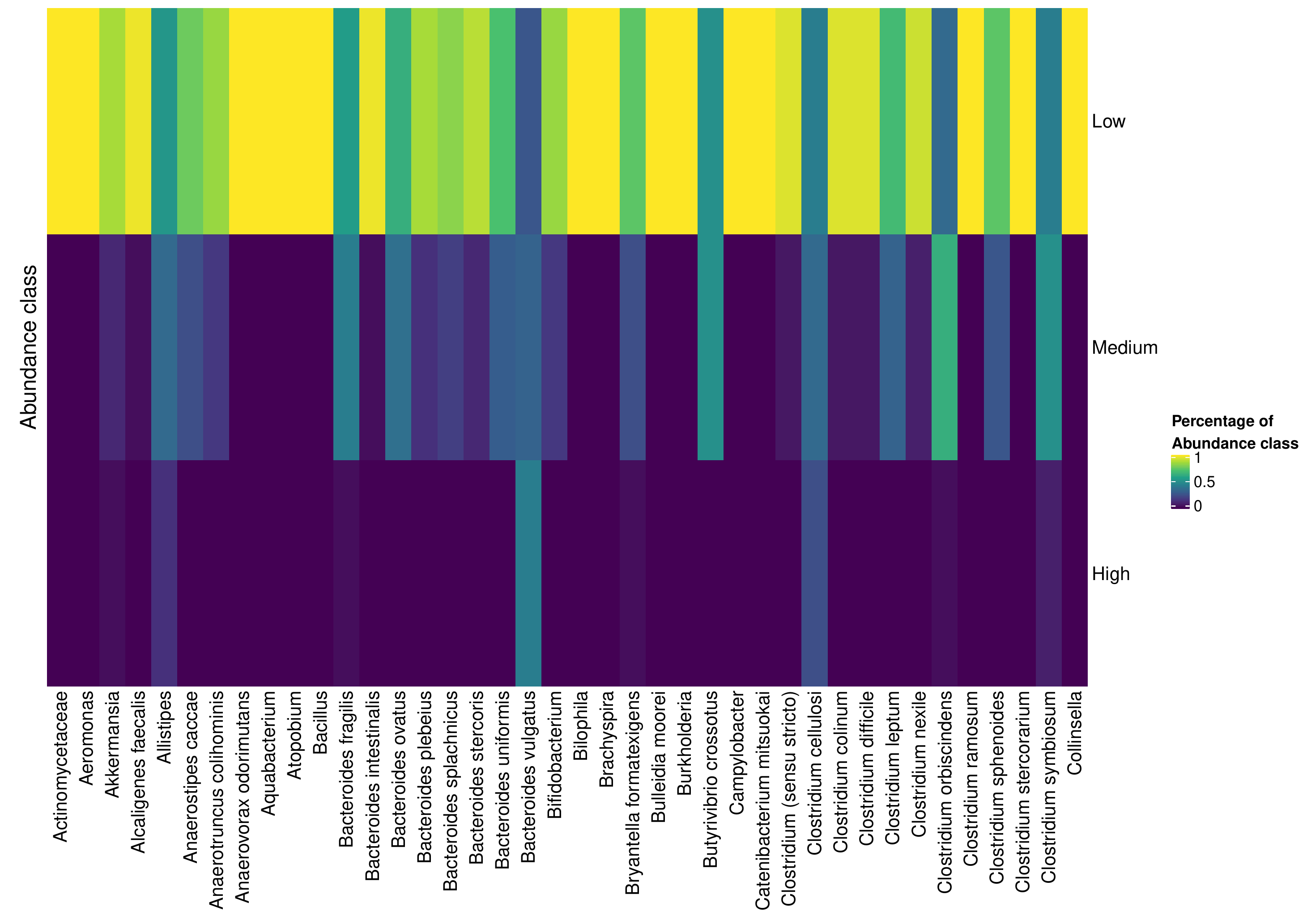}
    \caption{Distribution of the three estimated abundance classes - Part I}
    \label{AB0}
\end{figure}
\begin{figure}[ht!]
    \centering
        \includegraphics[scale=.34]{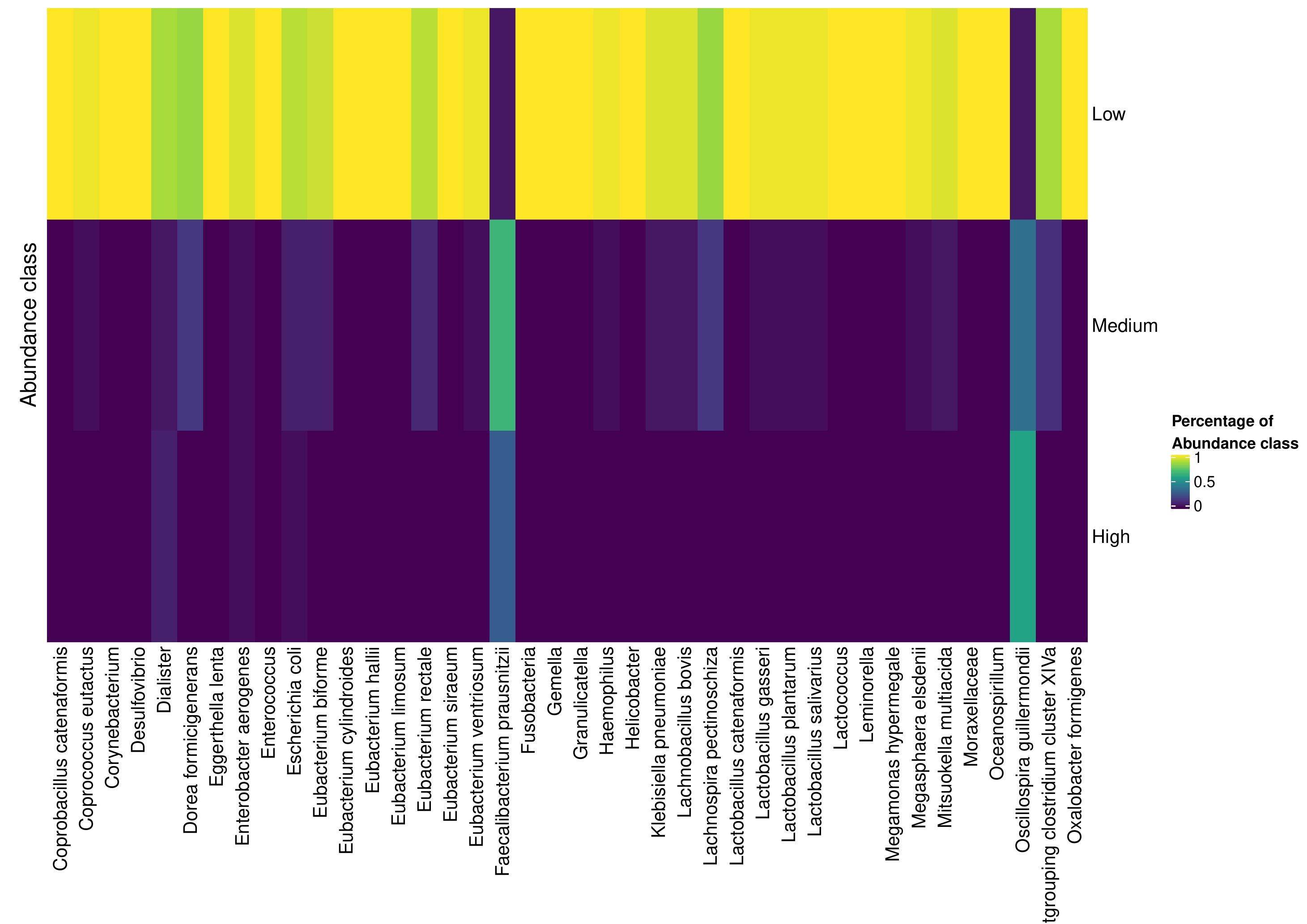}
\caption{Distribution of the three estimated abundance classes - Part II}
    \label{AB1}
\end{figure}

\begin{figure}[ht!]
    \centering
    \includegraphics[scale=.34]{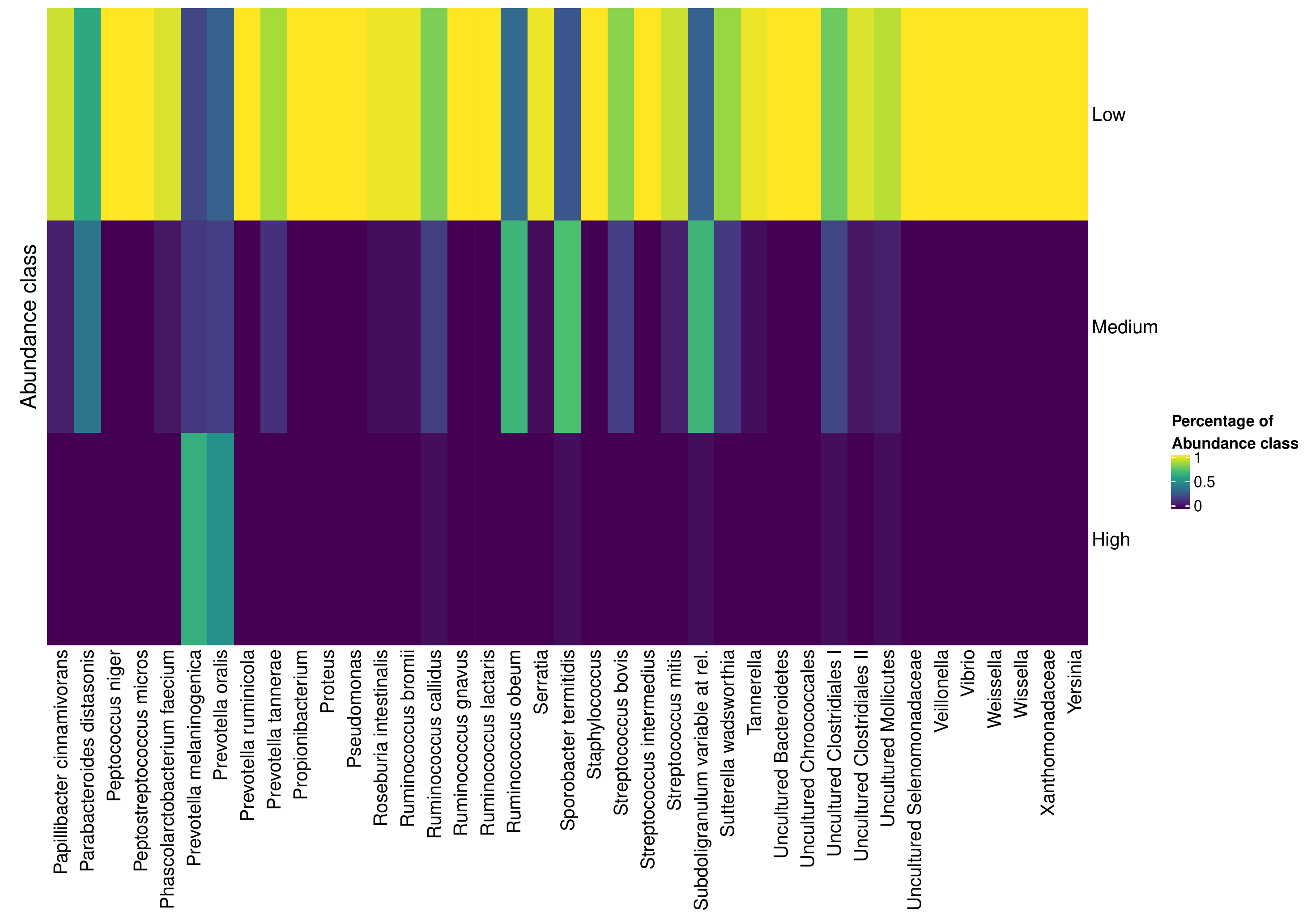}
\caption{Distribution of the three estimated abundance classes - Part III}    \label{AB2}
\end{figure}
\FloatBarrier

%%%%%%%%%%%%%%%%%%%%%%%%%%%%%%%%%%%%%%%%%%%%%%%%%%%%%%%%%%%%%%%%%%%%%%%%%%%%%%%%%%%%%%%%%%%%%%%%%%%%%%%
\clearpage
\textbf{Boxplots of the distributional characteristics across DCs}
\begin{figure}[!htpb]
	\centering % 1126x515
	\includegraphics[scale=.35]{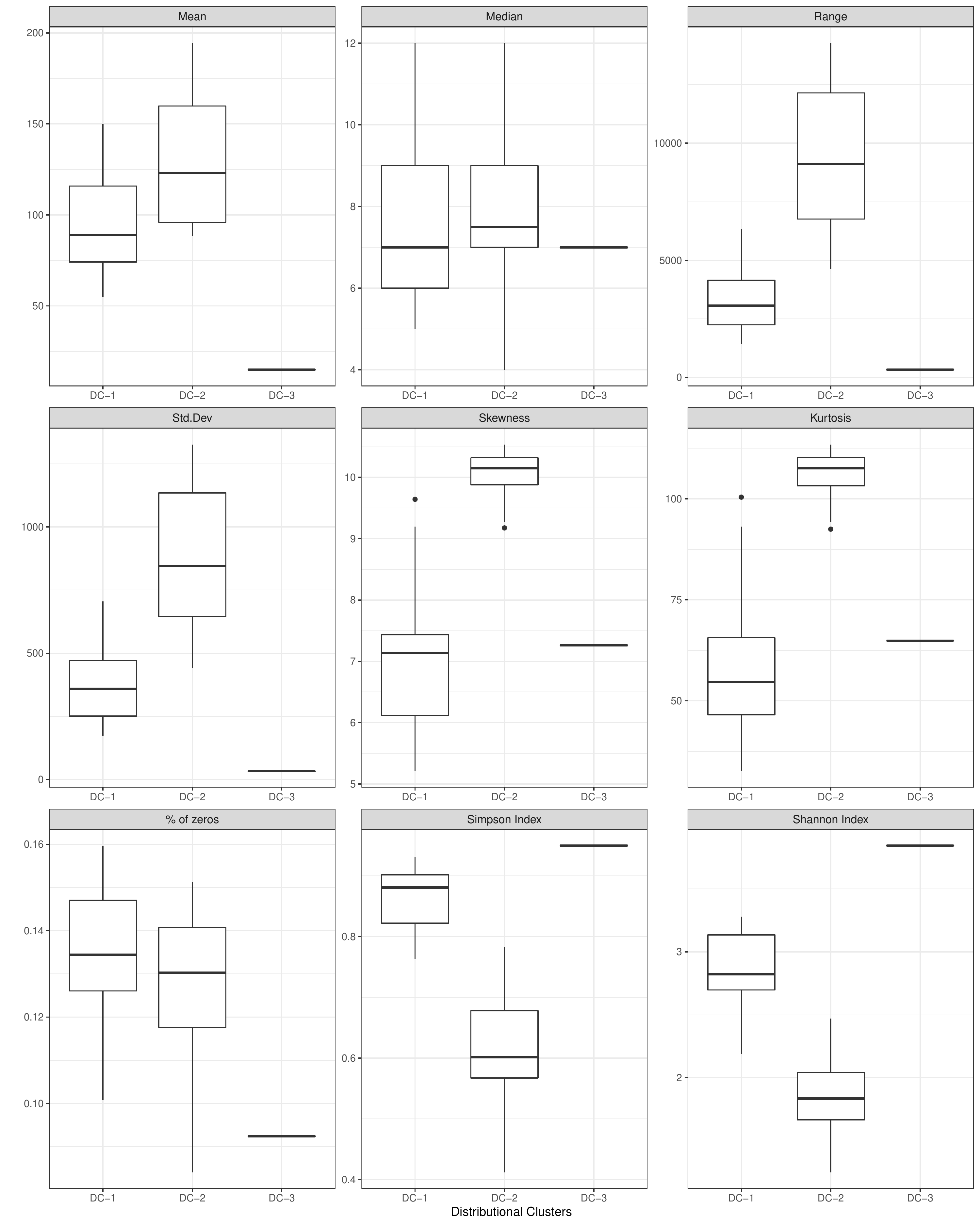}
	\caption{Boxplots representing how mean, median, range, standard deviation, skewness, kurtosis, \% of zeros, Shannon index, and Simpson index of each microbiome are distributed across the DCs. The plots highlight different distributional differences among the three DCs.}
	\label{Boxplots}
\end{figure}	

\FloatBarrier

\clearpage
\bibliographystyle{plainnat}

\bibliography{ALLBIB}
\end{document}